\setlist[itemize]{nosep}  
\setlist[enumerate]{nosep}  
\newtheorem{theorem}{Theorem}
\newtheorem{problem}{Problem}
\definecolor{Gray}{gray}{0.9}
\definecolor{LightCyan}{rgb}{0.88,1,1}
\newcommand{\cyancell}[1]{\cellcolor{LightCyan}#1}
\newcommand{\highlight}[1]{{\textcolor{blue}{#1}}}
\newcommand{\tablefont}{}
\newcommand*{\belowrulesepcolor}[1]{
	\noalign{
		\kern-\belowrulesep
		\begingroup
		\color{#1}
		\hrule height\belowrulesep
		\endgroup
	}
}
\newcommand*{\aboverulesepcolor}[1]{
	\noalign{
		\begingroup
		\color{#1}
		\hrule height\aboverulesep
		\endgroup
		\kern-\aboverulesep
	}
}
\newcommand*{\boldcheckmark}{
	\textpdfrender{
		TextRenderingMode=FillStroke,
		LineWidth=.7pt, 
	}{\checkmark}
}
\newcommand{\e}{justification\xspace}
\newcommand{\es}{justifications\xspace}
\newcommand{\E}{Justification\xspace}
\newcommand{\ES}{Justifications\xspace}
\newcommand{\pg}{product graph\xspace}
\newcommand{\pgs}{product graphs\xspace}
\newcommand{\PG}{Product Graph\xspace}
\newcommand{\attr}{attribute\xspace}
\newcommand{\Attr}{Attribute\xspace}
\newcommand{\attrs}{attributes\xspace}
\newcommand{\Attrs}{Attributes\xspace}
\DeclareMathOperator*{\argmax}{arg\,max}
\newcommand{\setProd}{{\mathcal{P}}} 
\newcommand{\setFeedback}{{\mathcal{Q}}} 
\newcommand{\setAttribute}{{\mathcal{A}}} 
\newcommand{\setData}{{\mathcal{D}}} 
\newcommand{\setTopic}{{\mathcal{T}}}
\newcommand{\numProd}{{L}}
\newcommand{\prodG}{{\mathcal{G}}}
\newcommand{\p}{{\mathrm{Pr}}}
\newcommand{\numSelection}{{50}\xspace}
\newcommand{\movie}{\textsc{movie-pg}\xspace}
\newcommand{\paper}{\textsc{citation-pg}\xspace}
\newcommand{\paperlarge}{\textsc{citation-100m-pg}\xspace}
\newcommand{\method}{\textsc{J-Recs}\xspace}
\newcommand{\pr}{$ \mathrm{PR} $\xspace}
\newcommand{\ppr}{$ \mathrm{PPR} $\xspace}
\newcommand{\har}{$ \mathrm{HAR} $\xspace}
\newcommand{\explod}{$ \mathrm{ExpLOD} $\xspace}
\newcommand{\mpr}{$ \mathrm{MP} $\xspace}
\newcommand{\mpand}{MP-AND\xspace}
\newcommand{\mpor}{MP-OR\xspace}
\def\BibTeX{{\rm B\kern-.05em{\sc i\kern-.025em b}\kern-.08em
    T\kern-.1667em\lower.7ex\hbox{E}\kern-.125emX}}
\begin{document}
	
\title{\huge \method: Principled and Scalable Recommendation Justification}

\author{\IEEEauthorblockN{Namyong Park\textsuperscript{1}$^{*}$, Andrey Kan\textsuperscript{2}, 
Christos Faloutsos\textsuperscript{1}$^{*}$, Xin Luna Dong\textsuperscript{2}}
\thanks{${}^*$Work performed while at Amazon.}
\IEEEauthorblockA{\textsuperscript{1}Carnegie Mellon University, \textsuperscript{2}Amazon\\
\{namyongp,christos\}@cs.cmu.edu,~\{avkan,lunadong\}@amazon.com}}

\maketitle

\begin{abstract}
Online recommendation is an essential functionality across a variety of services, including e-commerce and video streaming, where items to buy, watch, or read are suggested to users. Justifying recommendations, i.e., explaining why a user might like the recommended item, has been shown to improve user satisfaction and persuasiveness of the recommendation. In this paper, we develop a method for generating post-hoc \es that can be applied to the output of any recommendation algorithm. Existing post-hoc methods are often limited in providing diverse \es, 
as they either use only one of many available types of input data, or rely on the predefined templates.
We address these limitations of earlier approaches by developing \method, a method for producing concise and diverse \es. 
\method is a recommendation model-agnostic method that generates diverse \es based on various types of product and user data (e.g., purchase history and product attributes). The challenge of jointly processing multiple types of data is addressed by designing a principled graph-based approach for \e generation. In addition to theoretical analysis, we present an extensive evaluation on synthetic and real-world data. Our results show that \method satisfies desirable properties of \es, 
and efficiently produces effective \es, 
matching user preferences up to 20\% more accurately than baselines.
\end{abstract}

\begin{IEEEkeywords}
justifying recommendations, personalized justification, explainable recommendation, recommender systems
\end{IEEEkeywords}

\section{Introduction}
\label{sec:intro}
Recommender systems have a profound and ever increasing impact on how online users make purchase decisions, consume various types of content, and engage with the service. While recommender systems have seen significant progress in terms of recommendation accuracy, algorithms widely used in practice are mostly black boxes. This includes recommenders based on the latent factor models such as matrix factorization~\cite{DBLP:journals/computer/KorenBV09,DBLP:conf/sdm/ZhangWFM06}, as well as some deep learning-based recommenders~\cite{DBLP:conf/kdd/WangWY15,DBLP:conf/www/Fan0LHZTY19}. Such systems can be limited in their ability to justify recommendations.

Justification refers to explaining why a user might like the recommended item~\cite{biran2017explanation}. 
In other words, while recommendations suggest users \textit{what} they might like, 
justifications reveal \textit{why} the recommended item might match their preferences.
For instance, a list of recommended products can be supplemented with a justification that ``these items are similar to what you recently purchased.'' Several studies have shown that justifications can improve user satisfaction~\cite{herlocker2000explaining}, increase the persuasiveness and reliability of recommendations~\cite{DBLP:conf/icde/TintarevM07,DBLP:reference/sp/TintarevM15}, and help users make more accurate and efficient decisions~\cite{bilgic2005explaining}.

In this paper, we focus on post-hoc justification of recommendations. In post-hoc approaches, recommendations and \es are decoupled from each other; that is, \es are generated after the recommendation has been given. The main advantage of generating \es post-hoc is that post-hoc methods can be easily applied to different types of recommendation algorithms (thus \textit{recommendation model-agnostic}), which allows a greater freedom in the design of explanations~\cite{DBLP:conf/iui/VigSR09}.

Existing post-hoc methods typically select \es from predefined templates~\cite{DBLP:journals/ftir/ZhangC20},  such as ``your neighbors' rating for this item is ...''~\cite{herlocker2000explaining}, or they provide \es based on only one type of data, such as keywords~\cite{bilgic2005explaining},
although many types of data are often available.
While these methods have been shown to produce concise \es, they are limited in their ability to provide diverse \es. 
Moreover, some of these methods generate \es in a non-personalized manner~\cite{DBLP:conf/eacl/ZhouLWDHX17}, 
while other post-hoc methods require labeled ground truth data to train a \e model~\cite{DBLP:conf/emnlp/NiLM19}, thus posing an additional hurdle.

In summary, major challenges of generating post-hoc \es are in handling heterogeneous data 
(e.g., user purchase history, product attributes and reviews) 
to generate flexible and diversified justifications 
without the need for manually labeled data,
while enabling that the \e diversity can be increased without changing the underlying algorithm.
We address these challenges by proposing a novel principled graph-based method called \method. 
We use the graph to represent heterogeneous data that can be leveraged for \es. 
Moreover, the graph-based representation allows us to generate \es personalized with respect to both the user and the recommended item. 
Finally, we derive an objective function that agrees with intuition and leads to concise and diverse \es.
This paper makes the following contributions.
\begin{itemize}[leftmargin=1.0em]
	\item \textbf{Problem Formulation.} We present a graph-based formulation of the problem of generating concise and diverse \es given various types of user and product data.
	\item \textbf{Principled Approach.} We develop \method, a principled post-hoc framework to infer \es. \method is guided by a set of principles characterizing desired \es, and does not require manually labeled data.
	\item \textbf{Effectiveness.} We demonstrate that \method satisfies desirable properties of \es, and show the effectiveness of \method in experiments on real-world data (\Cref{fig:exp:crown:prefretrieval}).
	\item \textbf{Scalability.} Our proposed \method is scalable, and runs in time linear in the size of input data (\Cref{fig:exp:scalability}).
\end{itemize}

\begin{figure}[!t]
\par\vspace{1em}\par
\centering
\makebox[\linewidth][c]{\includegraphics[width=0.95\linewidth]{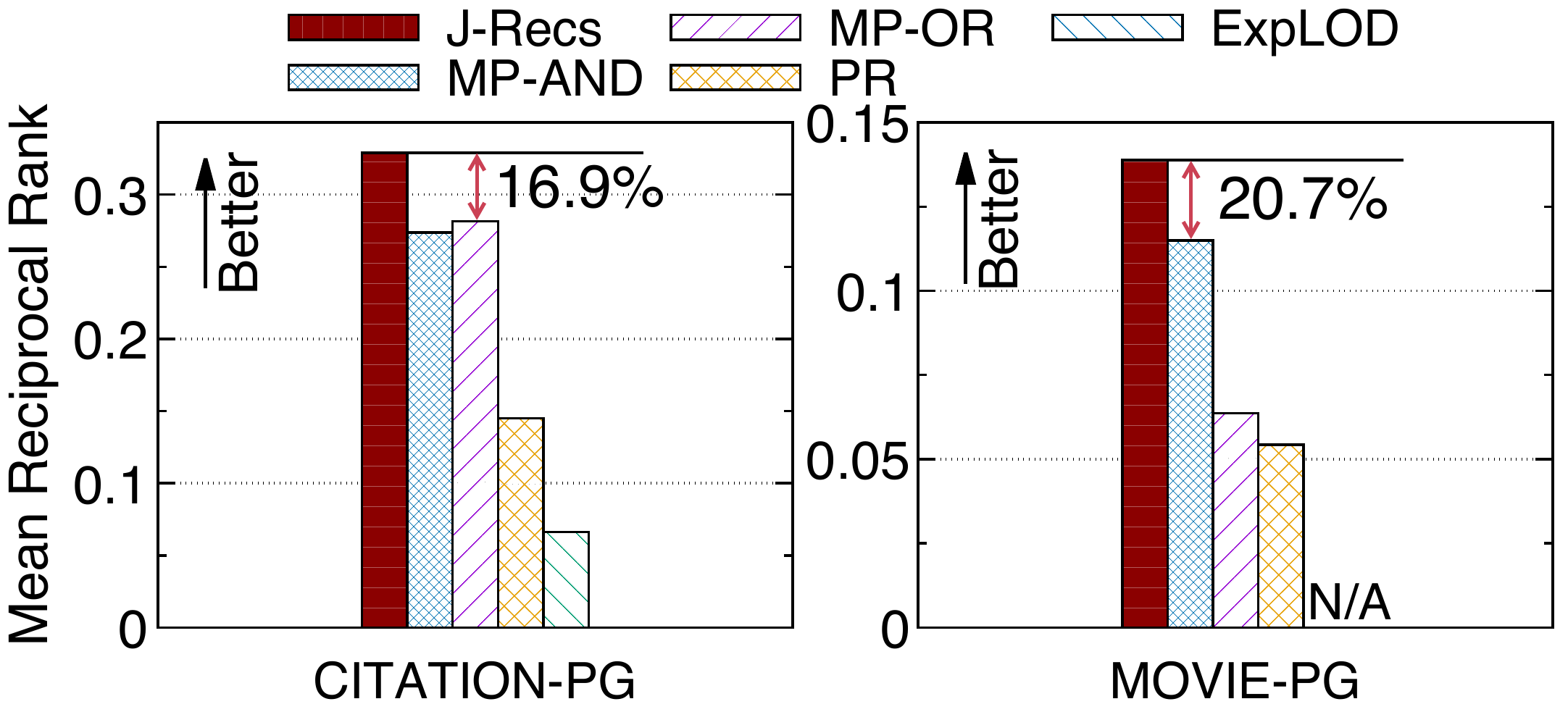}}
\caption{\method generates \es that match user preferences better than existing methods. Higher values are better. See \Cref{sec:realeval:quality} for details. 
}
\label{fig:exp:crown:prefretrieval}
\vspace{-1.0em}
\end{figure}

The rest of the paper is organized as follows. We formulate the problem of graph-based recommendation \e and  present our framework in~\Cref{sec:methods}. Then we provide evaluation results using axioms and real-world data in~\Cref{sec:axiomeval,sec:realeval}, respectively. After discussing related work in~\Cref{sec:related}, we conclude in~\Cref{sec:concl}.

\section{Justifying Recommendations}
\label{sec:methods}
In this section, we first provide the problem statement and define the \pg and \es.
We then describe how \method identifies good \es efficiently.
The symbols used in the paper is given in~\Cref{tab:symbols}.

\vspace{-0.5em}
\subsection{Problem Statement}

Products can be any item, such as movies, audio tracks, and papers, which can be suggested by recommender systems.
Let $ \setProd = \{p_1, p_2, \ldots, p_\numProd \} $ denote the set of all products.
We are given a product $ r_u \in \setProd $,
which is recommended to user $ u $ by an external recommendation algorithm.
We also have a set $ \setFeedback_u \subseteq \setProd $ of products, 
to which user~$ u $ gave positive feedback.
For instance, $ \setFeedback_u $ can be the products user~$ u $ purchased or rated highly.
Let $ q^u_i $ denote the $ i $-th product in $ \setFeedback_u $. 
For simplicity, we may omit subscript $ u $, and use $ Q $, $ r $, and $ q_i $.
Now consider various types of product information,
which we collectively denote by $ \setData $.
Examples of product data include the followings.
\begin{itemize}[leftmargin=1em]
	\item Product details: e.g., flavor, category, color of a product; actors, directors, and genres of a movie
	\item Product keywords: e.g., movie keywords submitted by users 
	\item Product reviews; sentences in the product reviews
	\item Product co-purchase and co-view records
\end{itemize}

Given these input data, our problem is stated as follows:

\begin{tcolorbox}[boxsep=0pt,boxrule=0pt,left=0.1cm,right=0.1cm,top=0.1cm,bottom=0.1cm]
	Given products $ \setProd $, product data $ \setData $, 
	recommended product $ r_u\!\in\!\setProd $, and products $ \setFeedback_u\!\subseteq\!P $ that received positive feedback by user~$ u $,
	efficiently infer \es relevant to the recommendation $ r_u $
	in a way that best reflects the user's preference expressed by $ \setFeedback_u $.
\end{tcolorbox}\vspace{-0.5em}
In the following sections, we further formalize this problem 
by defining the notion of \e, \e score, and the optimization objective.

\subsection{Product Graph and \ES}

A good \e needs to capture the user's preference, while being relevant to the recommended item.
Product data provide useful information that can help with identifying good \es.
A major challenge in effectively employing product data lies in
how to jointly take into account various sources of information in the product data. 

To address this challenge, we need to be able to measure the relevance and similarity between products and product data.
We note that each type of product data provides a signal that lets us identify a set of products that are similar in some specific respect.
For instance, products with the ``chocolate flavor'' are likely to have a similar taste, and
movies that share several keywords tend to have a lot of similarity.
Also, knowledge of similar products can enable us to see the relatedness of seemingly different product attributes.
To make the most of this mutually influential relationship,
we combine all available information into a graph, which we call a \textit{\pg}, 
and find good \es in terms of it.

\begin{figure}[!t]
\centering
\makebox[\linewidth][c]{\includegraphics[width=0.9\linewidth]{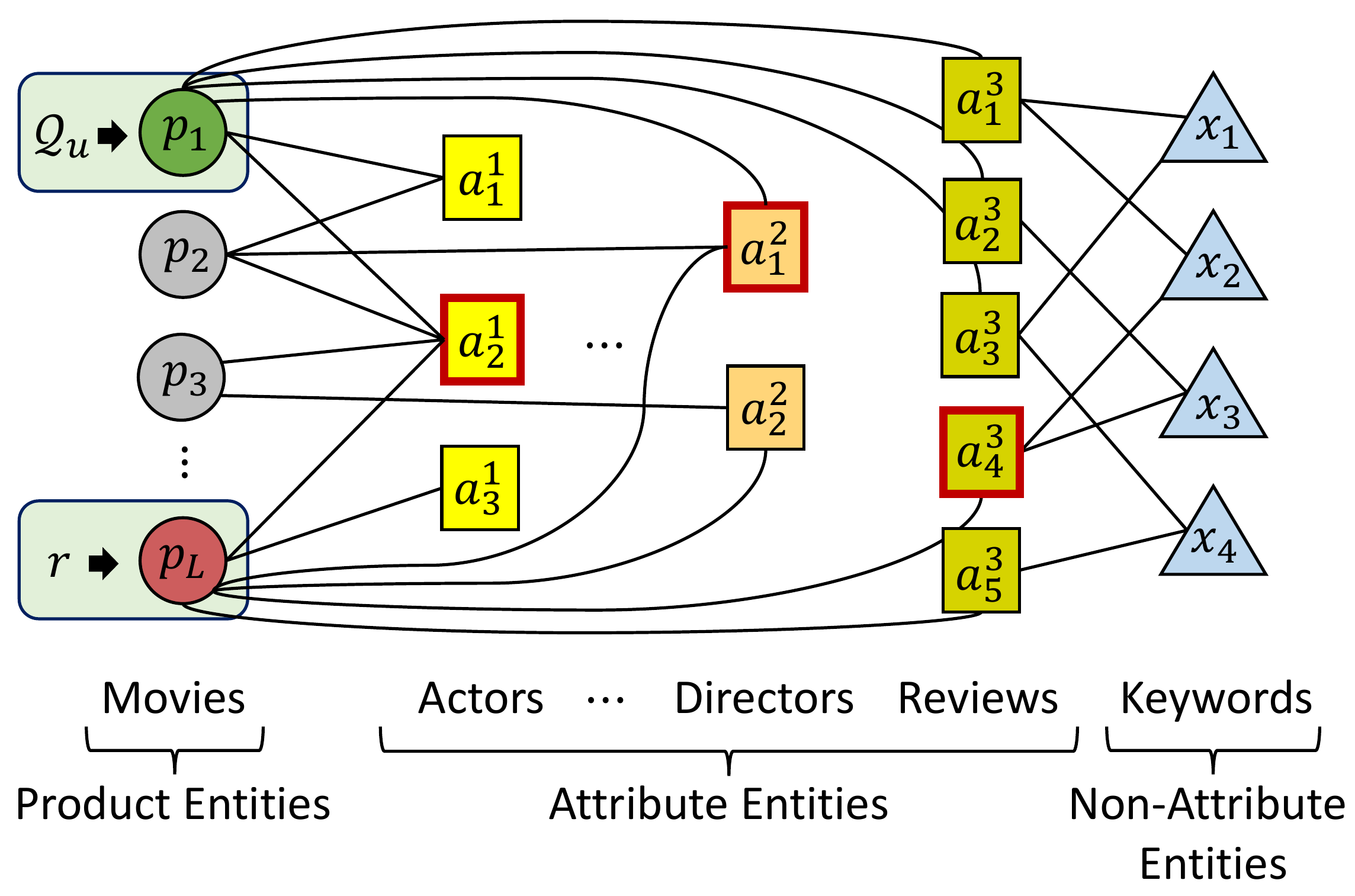}}
\caption{A movie product graph.
	$ \setFeedback_u $ denotes the set of products that received positive feedback from user $ u $, and
	$ r $ denotes the recommended product.
	Boxes with a bold red outline denote the \attr entities that are selected to form \es.}
\label{fig:product_graph}
\vspace{-1.0em}
\end{figure}

\textbf{\PG.} 
We refer to an instance of specific product data as \textit{``\attr entity''} (or \textit{``\attr''} in short) and 
use the term \textit{``\attr type''} to denote a specific type of product data.
For instance, each one of the examples given for the product data (e.g., movie genre, product color, review) represents one \attr type, and
science fiction is an \attr of the movie genre type.
We denote the set of all product \attrs by~$ \setAttribute $.
Products $ \setProd $ and their \attrs $ \setAttribute $ are nodes in a product graph $ \prodG $.
A \pg can also have \textit{non-\attr entities} as nodes,
which are entities not directly connected to products.
Instead, they connect similar product \attrs,
enabling us to identify similar products that do not share the same \attrs.
Examples include facts common to actors, and common review keywords.
\Cref{fig:product_graph} shows an example product graph.

\begin{figure}[!t]
\centering
\makebox[\linewidth][c]{\includegraphics[width=1.05\linewidth]{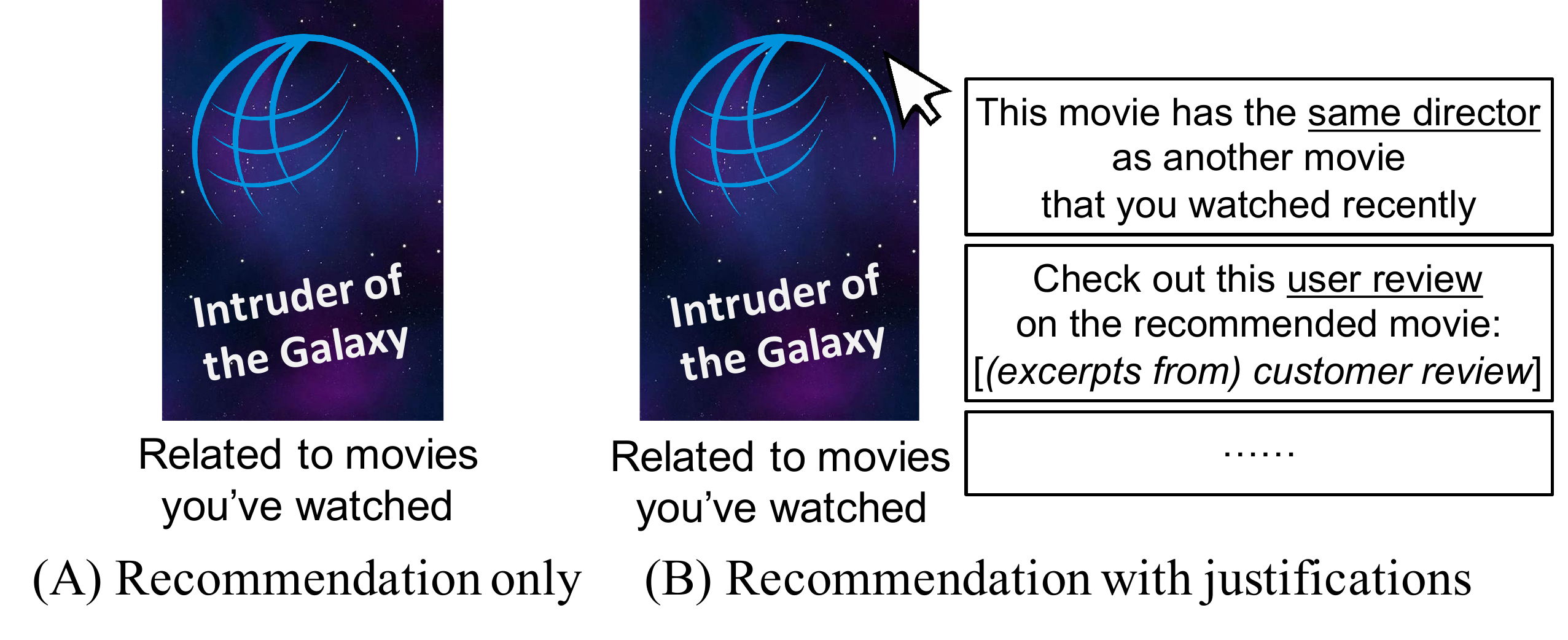}}
\caption{Product graph-based \es enrich the recommendation with relevant and diverse information on why the user might like the recommended item.}
\label{fig:exp:justification}
\vspace{-1.0em}
\end{figure}

\textbf{\E.}
We aim to find a set of relevant product \attrs, such as ``red color'' and ``high efficiency'', and 
produce \es in a format determined by the type of selected \attrs. 
More precisely, given a product graph $ \prodG $ and the recommended item $ r_u $, 
\es for $ r_u $ are a subset $ \setAttribute' \subseteq \setAttribute $ of \attrs
selected among those that are connected to $ r_u $ in $ \prodG $.
In~\Cref{fig:product_graph}, boxes with a bold red outline denote the \attrs that are chosen to form \es.

A product \attr, appropriately chosen in light of user preferences, naturally lends itself to an intuitive and concise \e.
For instance, when the director~James Cameron is chosen for a movie recommendation, 
this translates to the \e that \textit{``We recommend this movie directed by James Cameron, 
who made movies that are similar to other movies you watched.''}
When the user watched some of James Cameron's movies, we can further enrich this \e with that information.
Similarly, when a review gets selected,
this translates to the \e like 
\textit{``Check out this user review, which closely reflects your preference, and explains why you should consider buying this item.''}
\Cref{fig:exp:justification} illustrates how such \es enrich a movie recommendation, making the recommendation more persuasive.
As in these examples, the final \es can be generated 
based on custom rules that consider the \attr type and previous user actions.
Note that, since our \es are based on the product graph, 
the diversity of \es increases as we add more product data to it,
without needing to make changes to the framework.

\subsection{Quantifying the Quality of \ES}

To select good \es tailored to the user and the recommendation, we need to be able to measure the goodness of product \attrs.
Towards this goal, we design \e scores, 
such that higher scores indicate better \es
in consideration of 
the product graph $ \prodG $, products $ \setFeedback_u $, and the recommended item $ r $.
We quantify the \e score by considering two aspects, namely, relevance and diversity.

\subsubsection{\textbf{Relevance Score}}

Intuitively, a good justification should be highly relevant to the recommended product $ r $.
To measure the relevance of a product \attr, 
we consider how probable \attr $ a $ is given the recommended product $ r $, that is,
\begin{align}\label{eq:likelihood}
\p(A=a|R=r)
\end{align}
where $ R $ and $ A $ are random variables denoting a recommended product, and an \attr of the recommended product, respectively.
Consider a random variable $ U $, which denotes the product that matches the preference of user $ u $. 
Given $ \numProd $ products $ p_1,p_2,\ldots,p_\numProd $ in $ \setProd $, 
applying the sum rule of probability, the likelihood \eqref{eq:likelihood} can be expressed as:
\begin{align}\label{eq:sumrule}
\p(A=a|R=r) = \sum_{i=1}^{\numProd} \p(A=a,U=p_i|R=r).
\end{align}
Among the $ \numProd $ terms in~\eqref{eq:sumrule},
we know the user's preferences on the products in $ \setFeedback_u $, and have no information for other products.
So, we assume that the products to which user $ u $ gave no feedback are equally likely given the recommended item.
Specifically, we assume that $ P(A\!=\!a,U\!=\!p_i|R\!=\!r) = P(A\!=\!a,U\!=\!p_j|R\!=\!r) $ for all $ p_i,p_j \notin \setFeedback_u$ and any \attr $ a \in \setAttribute $.
In other words, in measuring the relevance score,
our focus is on the sum of likelihood terms of products in $ \setFeedback_u $: 
\begin{align}\label{eq:sumrule2}
\sum_{i=1}^{|\setFeedback_u|} \p(A=a,U=q_i|R=r).
\end{align}

\begin{table}[!t]
\caption{Table of symbols.}
\centering
\makebox[0.4\textwidth][c]{
	\begin{tabular}{ c | l }
		\toprule
		\textbf{Symbol} & \multicolumn{1}{c}{\textbf{Definition}} \\
		\midrule
		$ \setProd $ & Set of $ \numProd $ products $ (\setProd = \{p_1, p_2, \ldots, p_\numProd \}) $ \\
		$ r_u (r) $ & Product recommended to user $ u $ \\
		$ \setFeedback_u (\setFeedback) $ & Set of products to which user $ u $ gave positive feedback\\
		$ q_i^u (q_i) $ & $ i $-th product in $ \setFeedback_u $ (i.e., $ q_i \in \setFeedback_u $) \\
		$ a $ & Product \attr entity \\
		$ \setAttribute $ & Set of all \attr entities \\
		$ \setAttribute_{(r)} $ & \Attr entities of product $ r $ \\
		$ \setAttribute' $ & Set of selected \attr entities \\
		$ \rho $ & Relative weight of $ r_u $ in comparison to $ q_i $ \\
		$ \lambda_1, \lambda_2 $ & Non-negative weights for the \e diversity \\
		$ B $ & Budget (maximum number of \es to select) \\
		\bottomrule
	\end{tabular}
}
\label{tab:symbols}
\end{table}

We observe that, by the product rule of probability, the following holds for each of the $ |\setFeedback_u| $ terms in \eqref{eq:sumrule2}:
\begin{align}\label{eq:pair}
\p(a,q_i|r) = \underbrace{\p(a|q_i,r)}_{\substack{\text{\Attr}\\\text{relevane}}} \cdot \underbrace{\p(q_i|r)}_{\substack{\text{Feedback}\\\text{relevance}}}
\end{align}
where we have omitted random variables to simplify notations.
In~\eqref{eq:pair}, 
the first term $ \p(a|q_i,r) $ denotes the likelihood of \attr $ a $ given product $ q_i $ and recommended item $ r $, 
which closely matches our goal of finding a \e relevant to both the recommendation and the user's preference.
The second term $ \p(q_i|r) $ represents how probable product $ q_i $ is, given recommendation $ r $.
This term measures the relevance of feedback $ q_i \in \setFeedback_u $ with respect to $ r $,
which enables considering the fact that some of the products in $ \setFeedback_u $ may not be relevant to the current recommendation.
In other words, $ \p(q_i|r) $ acts as a weight for the \attr relevance term.

Let $ R_r(a) $ denote the relevance score of \attr $ a $ as a \e of the recommendation~$ r $,
which we define to be
\begin{align}\label{eq:relevancescore_as_pr}
	R_r(a) = \sum_{i=1}^{|\setFeedback_u|} \p(a|q_i,r) \cdot \p(q_i|r).
\end{align}

\textbf{Modeling \Attr and Feedback Relevance.} 
We model the two relevance terms in \eqref{eq:pair} in terms of the product graph $ \prodG $, 
as it provides rich information on how products and \attrs are related to each other.
Specifically, we consider a random walk using personalized PageRank (PPR)~\cite{DBLP:conf/www/Haveliwala02} over $ \prodG $, and
model the \attr relevance, $ \p(a|q_i,r) $,
by the proximity of $ a $ with respect to $ q_i $ and $ r $ in terms of PPR on $ \prodG $.
Intuitively, given a random walker who travels over $ \prodG $ and returns to $ q_i $ and $ r $ with a fixed probability, 
an \attr which gets visited more times than others is deemed more likely than other \attrs, 
in terms of $ \prodG $ given $ q_i $ and $ r $.
We use the following notation
\begin{align}\label{eq:ppr_notation}
\text{PPR}(a|q_i=1-\rho, r=\rho)
\end{align}
to denote the PPR score of \attr $ a $ with respect to $ q_i $ and $ r $,
in which $ q_i $ and $ r $ are given the probability mass of $ 1-\rho $ and $ \rho $ in the personalization vector, respectively,
and $ \rho $ controls the relative importance of $ r $ in comparison to $ q_i $.

Note that PPR scores are computed for all nodes in $ \prodG $, while we are concerned about the \attrs of the recommended item.
So we normalize PPR scores such that 
the scores of recommended item's \attrs sum to one,
denoting the normalized score by nPPR.
Then, $ \p(a|q_i,r) $ is computed as
\begin{align}\label{eq:ppr_attr_relevance}
\p(a|q_i,r) \triangleq \text{nPPR}(a|q_i=1-\rho, r=\rho).
\end{align}

Similarly, we model the feedback relevance, $ \p(q_i|r) $, using PPR with respect to $ r $ over $ \prodG $,
this time normalizing PPR scores over products in $ \setFeedback $.
Thus, $ \p(q_i|r) $ is compute as
\begin{align}\label{eq:ppr_feedback_relevance}
\p(q_i|r) \triangleq \text{nPPR}(q_i|r=1).
\end{align}

By \eqref{eq:relevancescore_as_pr}, and our choice given by \eqref{eq:ppr_attr_relevance} and \eqref{eq:ppr_feedback_relevance} 
to use nPPR on $ \prodG $ to model the relevance terms,
we define $ R_r(a) $ as follows:
\begin{align}\label{eq:entityrelevance}
R_r(a)\!\triangleq\!\sum_{i=1}^{|\setFeedback_u|} {\text{nPPR}(q_i|r\!=\!1)}\!\cdot\!{\text{nPPR}(a|q_i\!=\!1\!-\!\rho, r\!=\!\rho)}.
\end{align}

Finally, based on the above definition of \attr relevance \eqref{eq:entityrelevance}, 
we define the relevance score $ R_r(\setAttribute') $ of a set $ \setAttribute' \subseteq \setAttribute $ of \attrs
with respect to recommended product $ r $ to be:
\begin{align}\label{eq:entitiesrelevance}
R_r(\setAttribute') \triangleq \sum_{a \in \setAttribute'} R_r(a).
\end{align}

We show in \Cref{sec:axiomeval,sec:realeval} that our proposed approach yields more accurate relevance scores, 
which agree with our intuition, than other choices to model the \attr relevance.

\subsubsection{\textbf{Diversity Score}}
To provide informative and engaging \es to the user, 
we want the \es to consist of diverse product \attrs.
For example, users would find it more interesting to see 
a combination of relevant reviews, product features, and purchasing history than seeing only reviews.
Diversity has multiple aspects to it, and some aspects may be application dependent.
Below we introduce two diversity aspects.
Note that our method can easily be extended to incorporate different aspects of \e diversity.

We first consider the diversity in terms of \attr types. 
Let $ a_{t} $ denote the type of \attr $ a $.
We capture this diversity using the number of \attr types covered by the selected \attrs~$ \setAttribute' $:
\begin{align}
C_{\text{Type}}(\setAttribute') = \lvert \{ a_t : a \in \setAttribute' \} \rvert.
\end{align}

Secondly, for textual product data such as customer reviews, we may consider the diversity of their topics and sentiments,
which can be extracted by existing methods, e.g., by applying the latent Dirichlet allocation to product reviews with a decision threshold.
Let $ \setTopic $ denote the set of such topics of the textual data, and 
$ \setTopic(a) \subseteq \setTopic $ be the set of topics \attr $ a $ represents.
We capture this second diversity aspect using the number of topics \attrs $ \setAttribute' $ represent, defined by:
\begin{align}
C_{\text{Topic}}(\setAttribute') = \left\lvert \bigcup\nolimits_{a \in \setAttribute'} \setTopic(a) \right\rvert.
\end{align}

\subsubsection{\textbf{Justification Score}}
Our goal is multi-objective as we aim to produce \es with high relevance and diversity.
We cast this into a single-objective optimization problem using a weighted sum scalarization.
Since relevance and diversity terms can have different magnitude, 
we adopt the normalization scheme given in~\cite{grodzevich2006normalization}
such that each term is to be bounded between 0 and 1.
For example, we define the first diversity term $ D_{\text{Type}} $ to be:
\begin{align}\label{eq:diversity:modality}
D_{\text{Type}}(\setAttribute') = \frac{ C_{\text{Type}}(\setAttribute') - C_{\text{Type}}^{\text{Min}} }{C_{\text{Type}}^{\text{Max}} - C_{\text{Type}}^{\text{Min}}},
\end{align}
where $ C_{\text{Type}}^{\text{Max}}\!=\!\max\limits_{\setAttribute_B} \{ C_{\text{Type}}(\setAttribute_B) | \setAttribute_B\!\subseteq\!\setAttribute, 0 < \lvert \setAttribute_B \rvert\!\le\!B \} $ and $ C_{\text{Type}}^{\text{Min}}\!=\!\min\limits_{\setAttribute_B} \{ C_{\text{Type}}(\setAttribute_B) | \setAttribute_B\!\subseteq\!\setAttribute, 0 < \lvert \setAttribute_B \rvert \le B \} $, with $ B $ denoting the maximum number of \attrs to be selected.
Note that $ 0 \le D_{\text{Type}}(\setAttribute')  \le 1 $.
In the event that $ C_{\text{Type}}^{\text{Max}} $ and $ C_{\text{Type}}^{\text{Min}} $ are equivalent, 
we define $ D_{\text{Type}}(\setAttribute') = 1 $.

Applying the same normalization, we define the normalized relevance score and the topical diversity term as follows:
\begin{align}
nR_r(\setAttribute') &= \left(R_r(\setAttribute') - R_r^{\text{Min}} \right) / \left(R_r^{\text{Max}} - R_r^{\text{Min}} \right)\\
D_{\text{Topic}}(\setAttribute') &= \left(C_{\text{Topic}}(\setAttribute') - C_{\text{Topic}}^{\text{Min}}\right) / \left(C_{\text{Topic}}^{\text{Max}} - C_{\text{Topic}}^{\text{Min}}\right)
\end{align}
where $ R_r^{\text{Max}}, R_r^{\text{Min}}, C_{\text{Topic}}^{\text{Max}}, C_{\text{Topic}}^{\text{Min}} $ 
are defined in the same way as in the first term, using $ R_r(\setAttribute') $ and $ C_{\text{Topic}}(\setAttribute') $ instead of $ C_{\text{Type}}(\setAttribute') $.
In sum, given a recommended item $ r $,
we define the justification score $ J_r(\setAttribute') $ of a set $ \setAttribute' \subseteq \setAttribute $ of \attrs as:
\begin{align}\label{eq:justification_score}
J_r(\setAttribute') = nR_r(\setAttribute') + \lambda_1\!\cdot\!D_{\text{Type}}(\setAttribute') + \lambda_2\!\cdot\!D_{\text{Topic}}(\setAttribute')
\end{align}
where $ \lambda_1 $ and $ \lambda_2 $ are non-negative weights for diversity terms.

\subsection{\E Discovery}

Based on the above definition of relevance, diversity, and \e scores, we formally define the justification discovery problem as follows.

\begin{tcolorbox}[boxsep=0pt,boxrule=0pt,left=0.1cm,right=0.1cm,top=0.1cm,bottom=0.1cm,enlarge top by=0.15cm]
\begin{problem}[Justification Discovery]\label{problem:objective}
Given a product graph $ \prodG $, a recommended item $ r \in \setProd $, 
products $ \setFeedback \subseteq \setProd$ that received positive feedback, and a budget $ B $,
find a set $ \setAttribute^* \subseteq \setAttribute $ of product \attrs that maximizes the \e score, i.e.,
\begin{align}\label{eq:objective}
\setAttribute^* = \argmax_{\setAttribute' \subseteq \setAttribute} J_r(\setAttribute') \text{ such that } |\setAttribute'| \le B.
\end{align}
\end{problem}
\end{tcolorbox}
Due to the combinatorial nature of this optimization problem, 
solving it exactly is computationally intractable.
Instead, we show that the objective \eqref{eq:objective} is submodular, 
which allows us to efficiently obtain near-optimal \es.

\begin{theorem}\label{thm:submodular}
The justification score $ J_r(\setAttribute') $ given by \eqref{eq:justification_score} is a non-negative, monotone, submodular function.
\end{theorem}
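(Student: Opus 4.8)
The plan is to verify the three claimed properties of $J_r$ in \eqref{eq:justification_score} term by term, using the fact that the class of non-negative, monotone, submodular set functions on $2^{\setAttribute}$ is closed under non-negative linear combinations. It therefore suffices to show that each of $R_r(\cdot)$, $C_{\text{Type}}(\cdot)$, and $C_{\text{Topic}}(\cdot)$ has all three properties, that the affine normalizations preserve them on the feasible domain of Problem~\ref{problem:objective} (nonempty $\setAttribute'$ with $|\setAttribute'| \le B$), and then to reassemble $J_r$ with the weights $\lambda_1, \lambda_2 \ge 0$.

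For the relevance term, \eqref{eq:entityrelevance}--\eqref{eq:entitiesrelevance} exhibit $R_r(\setAttribute') = \sum_{a \in \setAttribute'} R_r(a)$ as a modular set function whose coefficients $R_r(a)$ are non-negative, since each is a sum of products of normalized PPR scores and those are non-negative; a modular function with non-negative coefficients is non-negative, monotone, and (being both sub- and super-modular) submodular. For the diversity terms, I would recognize both $C_{\text{Type}}$ and $C_{\text{Topic}}$ as coverage functions: partitioning $\setAttribute$ into the type-classes $\setAttribute^{(t)} = \{a \in \setAttribute : a_t = t\}$ gives $C_{\text{Type}}(\setAttribute') = \sum_t \min\{1, |\setAttribute' \cap \setAttribute^{(t)}|\}$, and $C_{\text{Topic}}(\setAttribute') = |\bigcup_{a \in \setAttribute'} \setTopic(a)|$ is the cardinality of a union of fixed ``covered'' sets. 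Both are standard monotone submodular functions: the marginal gain of adding $a$ to $\setAttribute'$ is $1$ iff $\setAttribute'$ contains no attribute of $a$'s type and $0$ otherwise, respectively $|\setTopic(a) \setminus \bigcup_{b \in \setAttribute'} \setTopic(b)|$, and in each case this quantity is non-negative and non-increasing in $\setAttribute'$ — precisely the diminishing-returns characterization of submodularity. Non-negativity of both is immediate, as they are cardinalities.

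Each normalized term has the form $g(\setAttribute') = (f(\setAttribute') - f^{\text{Min}})/(f^{\text{Max}} - f^{\text{Min}})$, an affine image of $f$ with positive slope when $f^{\text{Max}} > f^{\text{Min}}$, so it inherits monotonicity and submodularity; and since $f^{\text{Min}}$ is by definition the minimum of $f$ over exactly the feasible sets, $f(\setAttribute') \ge f^{\text{Min}}$ there, whence $g \ge 0$ on that domain. The degenerate case $f^{\text{Max}} = f^{\text{Min}}$ is covered by the paper's convention of setting the term to the constant $1$, which is non-negative, monotone, and modular. Combining, $J_r = nR_r + \lambda_1 D_{\text{Type}} + \lambda_2 D_{\text{Topic}}$ with $\lambda_1, \lambda_2 \ge 0$ is a non-negative linear combination of non-negative monotone submodular functions, giving the theorem.

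I do not expect a deep obstacle: the argument is a decompose-and-recombine proof resting on the closure properties of submodular functions. The two points that need care are (i) the additive shift in the normalization, which can make a term negative on the empty set, so the non-negativity claim must be read over the feasible (nonempty, budget-respecting) domain that Problem~\ref{problem:objective} and the subsequent greedy algorithm actually use, and the $f^{\text{Max}} = f^{\text{Min}}$ corner case must be dispatched via the stated convention; and (ii) writing $C_{\text{Type}}$ and $C_{\text{Topic}}$ in coverage form so that their submodularity follows from the routine marginal-gain argument rather than a bespoke computation.
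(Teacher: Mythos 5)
Your proof is correct and follows essentially the same decompose-and-recombine route as the paper: show that the relevance term is modular with non-negative coefficients and that the two diversity terms are monotone submodular coverage-type functions (your coverage-form marginal-gain argument is the paper's three-case topic analysis in standard dress), then invoke closure of non-negative, monotone, submodular functions under non-negative linear combinations. If anything, your version is slightly more careful than the paper's, which argues about the unnormalized $R_r$, $C_{\text{Type}}$, $C_{\text{Topic}}$ and does not explicitly address that the affine normalization preserves these properties, nor that non-negativity of the shifted terms holds on the nonempty, budget-respecting sets over which $C^{\text{Min}}$ is defined.
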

\begin{proof}
(a) $ J_r(\setAttribute') $ is non-negative since it is a weighted sum of three non-negative scores with non-negative weights.

(b) A set function $ f $ is monotone if for every $ \setAttribute_1 \subseteq \setAttribute_2 $, we have that $ f(\setAttribute_1) \le f(\setAttribute_2) $.
Given that
{\small\begin{align*}
\small
R_r(\setAttribute_1) =& \sum\nolimits_{a \in \setAttribute_1} R_r(a) \le \sum\nolimits_{a \in \setAttribute_2} R_r(a) = R_r(\setAttribute_2),\\
C_{\text{Type}}(\setAttribute_1) =& \lvert \{ a_t \!:\! a \in \setAttribute_1 \} \rvert \le \lvert \{ a_t\!:\!a \in \setAttribute_2 \} \rvert = C_{\text{Type}}(\setAttribute_2),\\
C_{\text{Topic}}(\setAttribute_1) =& \left\lvert \bigcup\nolimits_{a \in \setAttribute_1} \setTopic(a) \right\rvert \le \left\lvert \bigcup\nolimits_{a \in \setAttribute_2} \setTopic(a) \right\rvert = C_{\text{Topic}}(\setAttribute_2),
\end{align*}}$ J_r(\setAttribute') $ is monotone as it is a non-negative weighted sum of these monotone scores. 

(c) Given a set function $ f $ of \attrs $ \setAttribute' $, and \attr $ a $,
let $ \Delta f(a|\setAttribute')\!=\!f(\setAttribute' \cup \{a\}) - f(\setAttribute') $ be the marginal gain of adding $ a $ to $ \setAttribute' $.
Then $ f $ is submodular if for every $ \setAttribute_1, \setAttribute_2 \subseteq \setAttribute $ with $ \setAttribute_1\!\subseteq\!\setAttribute_2 $ and 
every $ a \in \setAttribute\!\setminus\!\setAttribute_2 $, 
it holds that $ \Delta f(a|\setAttribute_1) \ge \Delta f(a|\setAttribute_2) $.

As every $ a\!\in\!\setAttribute \setminus \setAttribute_2 $ results in the same marginal gain $ R_r(a) $ to both $ R_r(\setAttribute_1) $ and $ R_r(\setAttribute_2) $,
$ \Delta R_r(a | \setAttribute_1) = \Delta R_r(a | \setAttribute_2) $.

Let $ \setTopic(\setAttribute') = \bigcup\nolimits_{a \in \setAttribute'} \setTopic(a) $. 
The topics $ \setTopic(a) $ covered by $ a $ can be classified into three cases.
The first case is the set $ \setTopic_{1} \subseteq \setTopic(a) $ of topics that belong to $ \setTopic(\setAttribute_1) $. 
Since $ \setAttribute_1 \subseteq \setAttribute_2 $, $ \setTopic_1 \subseteq \setTopic(\setAttribute_2) $; 
thus, these topics make no additional contributions to both $ D_{\text{Topic}}(\setAttribute_1) $ and $ D_{\text{Topic}}(\setAttribute_2) $.
The second case is the set $ \setTopic_{2} \subseteq \setTopic(a) $ of topics 
that do not belong to $ \setTopic(\setAttribute_1) $, but belong to $ \setTopic(\setAttribute_2) $.
Since these topics already belong to $ \setTopic(\setAttribute_2) $,
they make positive contributions to $ D_{\text{Topic}}(\setAttribute_1) $, while making no contributions to $ D_{\text{Topic}}(\setAttribute_2) $.
The third case is the set $ \setTopic_{3} \subseteq \setTopic(a) $ of topics 
that do not belong to both $ \setTopic(\setAttribute_1) $ and $ \setTopic(\setAttribute_2) $.
In this case, the topics in $ \setTopic_{3} $ make equal contributions to $ D_{\text{Topic}}(\setAttribute_1) $ and $ D_{\text{Topic}}(\setAttribute_2) $.
Thus, in all cases, $ \Delta D_{\text{Topic}}(a | \setAttribute_1) \ge \Delta D_{\text{Topic}}(a | \setAttribute_2) $.
The same argument applies to show that $ \Delta D_{\text{Type}}(a | \setAttribute_1) \ge \Delta D_{\text{Type}}(a | \setAttribute_2) $.

Therefore, since three functions $ R_r(\setAttribute'), D_{\text{Type}}(\setAttribute') $, and $ D_{\text{Topic}}(\setAttribute') $ are all submodular, 
the justification score $ J_r(\setAttribute') $, which is a weighted sum of these submodular functions with non-negative weights, is also submodular.
\end{proof}

\begin{theorem}\label{thm:approximation}
\Cref{problem:objective} admits a $ (1\!-\!\frac{1}{e}) $-approximation.
\end{theorem}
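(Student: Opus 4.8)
The plan is to reduce \Cref{problem:objective} to the classical problem of maximizing a monotone submodular function subject to a cardinality constraint, and then invoke the greedy guarantee of Nemhauser, Wolsey, and Fisher. By \Cref{thm:submodular}, $J_r$ is non-negative, monotone, and submodular; and the feasible set in \eqref{eq:objective}, namely all $\setAttribute' \subseteq \setAttribute$ with $|\setAttribute'| \le B$, is exactly a cardinality constraint (equivalently, the independent sets of a uniform matroid of rank $B$). Hence all preconditions of that theorem are satisfied, and the greedy algorithm yields a $(1-\frac{1}{e})$-approximation to the optimum $\setAttribute^*$.

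Concretely, I would state the greedy procedure explicitly: start from $\setAttribute' \leftarrow \emptyset$; for $B$ iterations, select $a^\star \in \argmax_{a \in \setAttribute \setminus \setAttribute'} \Delta J_r(a \mid \setAttribute')$ (with $\Delta J_r$ the marginal gain defined in the proof of \Cref{thm:submodular}) and update $\setAttribute' \leftarrow \setAttribute' \cup \{a^\star\}$; then return $\setAttribute'$. I would then cite the standard inductive analysis showing that after $k$ greedy steps the value is at least $\bigl(1-(1-\tfrac{1}{B})^k\bigr) J_r(\setAttribute^*)$; taking $k = B$ and using $(1-\tfrac1B)^B \le \tfrac1e$ gives $J_r(\setAttribute') \ge (1-\tfrac1e)\,J_r(\setAttribute^*)$, as claimed. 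Since $1-(1-\tfrac1B)^B > 1-\tfrac1e$ for every finite $B\ge 1$, the stated factor is a valid (slightly conservative) bound for all budgets. The one point worth spelling out is that the normalization constants $R_r^{\text{Min}}, R_r^{\text{Max}}, C_{\text{Type}}^{\text{Min}}, C_{\text{Type}}^{\text{Max}}, C_{\text{Topic}}^{\text{Min}}, C_{\text{Topic}}^{\text{Max}}$ depend only on $\setAttribute$ and $B$, not on the running set $\setAttribute'$, so the affine rescalings in \eqref{eq:diversity:modality} and the subsequent definitions do not affect monotonicity or submodularity (as already used in \Cref{thm:submodular}), and the degenerate case $C^{\text{Max}} = C^{\text{Min}}$ is covered by the convention $D_{\bullet} \equiv 1$, which is trivially modular.

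There is no substantive mathematical obstacle here once \Cref{thm:submodular} is established — the main work is expository: stating the cardinality-constrained greedy algorithm precisely and pointing to the known analysis. The only ``extra'' remark I would add, to connect with the scalability claims, is that each marginal gain $\Delta J_r(a \mid \setAttribute')$ is cheap to evaluate — it equals $R_r(a)$ plus the incremental number of newly covered attribute types and topics, after the relevant PPR scores have been precomputed — so the $(1-\tfrac1e)$-approximation can be obtained efficiently, not merely in principle.
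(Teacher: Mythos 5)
Your proposal is correct and follows exactly the same route as the paper: invoke \Cref{thm:submodular} to establish that $J_r$ is non-negative, monotone, and submodular, observe that $|\setAttribute'| \le B$ is a cardinality constraint, and cite the Nemhauser--Wolsey greedy guarantee. The additional details you supply (the explicit greedy loop, the $1-(1-\tfrac1B)^B$ bound, and the remark that the normalization constants are fixed) are sound elaborations of what the paper states in one sentence.
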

\begin{proof}
Maximizing a non-negative, monotone, submodular function subject to a cardinality constraint 
admits a $ (1\!-\!\frac{1}{e}) $ approximation
under a greedy approach in which the item with the largest marginal gain is selected at each step~\cite{DBLP:journals/mor/NemhauserW78}.
This theorem follows since our maximization objective $ J_r(\setAttribute') $ is non-negative, monotone, and submodular by \Cref{thm:submodular}.
\end{proof}

\textbf{\method algorithm.} \Cref{thm:approximation} leads to the \method algorithm in~\Cref{alg:framework},
which finds a $ (1\!-\!\frac{1}{e}) $-approximation of the optimal \es.
In \Cref{alg:framework}, we first compute the relevance score of all \attrs based on \eqref{eq:entityrelevance}.
Then, we repeatedly find the product \attr from $ \setAttribute_{(r)} $ with the greatest marginal gain, 
where $ \setAttribute_{(r)} $ is the \attrs of product $ r $, and
add it to $ \setAttribute^* $ until we exhaust the given budget $ B $.

\setlength{\textfloatsep}{0pt}
\begin{algorithm}[!t]
\DontPrintSemicolon
\SetNoFillComment
\SetKwComment{Comment}{$\triangleright$\ }{}
\KwIn{Product graph $ \prodG $, recommended product $ r $, products $ \setFeedback $ with positive feedback, budget $ B $.}
\KwOut{\ES $ \setAttribute^* $ s.t. $ |\setAttribute^*| \le B $.}

Compute the relevance score $ R_r(a) $ given by \eqref{eq:entityrelevance}.\label{alg:framework:relscore}\\
$ \setAttribute^* \gets \emptyset $\\
\While{$ |\setAttribute^*| \le B $}{
	$ a^* \gets \argmax_{a \in \setAttribute_{(r)} \setminus \setAttribute^*} \left( J_r(\setAttribute^* \cup \{a\}) - J_r(\setAttribute^*) \right) $\\
	$ \setAttribute^* \gets \setAttribute^* \cup \{ a^* \} $\\
}
\Return $ \setAttribute^* $
\caption{\method algorithm.}
\label{alg:framework}
\end{algorithm}

\begin{theorem}
\Cref{alg:framework} runs in $ O(|E|) $ time with $ O(|\setFeedback|) $ processors,
taking $ O(|E||\setFeedback|) $ steps in total, assuming that $ B M < |V| $,
where $ M = \max( N_1, N_2 ) $ with $ N_1 $ and $ N_2 $ denoting the number of \attr types and topics, respectively.
\end{theorem}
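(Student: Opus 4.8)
The plan is to split \Cref{alg:framework} into the relevance-score computation on Line~\ref{alg:framework:relscore} and the subsequent greedy loop, bound each, and observe that the first dominates.

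\textbf{Phase 1 (relevance scores).} By \eqref{eq:entityrelevance}, obtaining $R_r(a)$ needs the vector $\text{nPPR}(\cdot|r{=}1)$ --- one random walk personalized on $r$ --- together with, for each $q_i \in \setFeedback$, the vector $\text{nPPR}(\cdot|q_i{=}1{-}\rho, r{=}\rho)$ --- one walk personalized on $\{q_i, r\}$ --- i.e., $1+|\setFeedback| = O(|\setFeedback|)$ PPR computations on $\prodG$. I would compute each by power iteration $\mathbf{p}^{(t+1)} = (1-c)W\mathbf{p}^{(t)} + c\mathbf{v}$: one step costs $O(|E|)$, and for a fixed teleport probability $c$ the number of steps needed to reach any fixed target accuracy is a constant, so each PPR vector costs $O(|E|)$ time (assuming, as usual, $|V| = O(|E|)$ on the relevant component). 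The $|\setFeedback|$ \attr-relevance walks are mutually independent, so on $O(|\setFeedback|)$ processors they all terminate in $O(|E|)$ wall-clock time, the single feedback-relevance walk being absorbed; the two nPPR normalizations are sums over $\setAttribute_{(r)}$ and over $\setFeedback$, costing $O(|V|)$; and assembling $R_r(a)$ over the relevant \attrs is a coordinatewise weighted sum of $|\setFeedback|$ vectors, $O(|V|\,|\setFeedback|)$ work, which parallelizes to $O(|V|) = O(|E|)$ time on $|\setFeedback|$ processors. Hence Phase~1 runs in $O(|E|)$ time on $O(|\setFeedback|)$ processors, $O(|E|\,|\setFeedback|)$ total steps.

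\textbf{Phase 2 (greedy loop).} The \textbf{while} loop runs $B{+}1$ times. I would precompute, once, the normalization constants: $R_r^{\text{Min}}$ and $R_r^{\text{Max}}$ are the minimum and the sum of the $B$ largest entries of $\{R_r(a)\}$ (linear-time selection); $C_{\text{Type}}^{\text{Min}}{=}1$, $C_{\text{Type}}^{\text{Max}}{=}\min(B, N_1)$, $C_{\text{Topic}}^{\text{Min}}{=}\min_a|\setTopic(a)|$, and for $C_{\text{Topic}}^{\text{Max}}$ the easy upper bound $N_2$ (exact maximization is a max-coverage instance, and the bound still keeps $D_{\text{Topic}}$ in $[0,1]$). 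During the loop I maintain a bit-array over \attr types and one over topics recording what $\setAttribute^*$ already covers, together with, for every candidate $a$, a counter $\delta(a) = |\setTopic(a)\setminus\setTopic(\setAttribute^*)|$. By \eqref{eq:justification_score} the marginal gain of $a$ is then a fixed scalar times $R_r(a)$ (precomputed), plus a $0/1$ term (one lookup), plus a scalar times $\delta(a)$ --- $O(1)$ per candidate --- so selecting the best candidate is $O(|\setAttribute_{(r)}|)$ per iteration, while updating the bit-arrays and the affected $\delta(a)$'s after a pick costs, summed over the whole run, $O\!\left(\sum_{a\in\setAttribute_{(r)}}|\setTopic(a)|\right)$ \attr--topic incidences, hence $O(|E|)$. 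Using $|\setAttribute_{(r)}| = \deg_{\prodG}(r) \le |E|$ and the hypothesis $BM < |V|$ --- so the budget and the diversity-feature counts are small relative to the graph --- the greedy loop is absorbed into Phase~1's budget.

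Summing the two phases yields $O(|E|)$ time on $O(|\setFeedback|)$ processors and $O(|E|\,|\setFeedback|)$ total steps, as claimed. I expect the main obstacle to be the Phase~2 accounting: one must pin down how every piece of $J_r$ --- especially the two diversity terms and their normalization constants $C^{\text{Max}}_\bullet$ --- is (pre)computed and incrementally maintained within budget, and track precisely where $BM < |V|$ is needed so that the inherently sequential greedy loop does not dominate; the Phase~1 analysis, by contrast, is routine once one invokes the constant-iteration guarantee for power iteration.
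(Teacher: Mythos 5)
Your Phase~1 is exactly the paper's argument: the relevance scores require $|\setFeedback|+1$ \ppr vectors, each computable by power iteration with sparse matrix--vector products in $O(|E|)$ time, and since the computations are independent they finish in $O(|E|)$ wall-clock time on $O(|\setFeedback|)$ processors. Phase~2 follows the same outline as the paper (precompute the normalization constants, then run the greedy loop over $\setAttribute_{(r)}$) but differs in two details. First, the paper observes that computing $C_{\text{Topic}}^{\text{Max}}$ exactly is a maximum-coverage instance, hence NP-hard, and charges $O(BM)$ for a greedy approximation of it (and likewise for the other normalizers); you instead substitute the crude upper bound $N_2$. That is cheaper and keeps $D_{\text{Topic}}$ in $[0,1]$, but note it changes the normalization constant itself --- i.e., you are analyzing a slightly rescaled objective rather than the one in \eqref{eq:justification_score} --- whereas the paper at least targets (an approximation of) the defined constant; for the runtime claim this is immaterial, but it is a modification of the algorithm, not just of its accounting. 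Second, you maintain incremental coverage counters so that each marginal-gain evaluation is $O(1)$, giving $O(B\,|\setAttribute_{(r)}|)$ plus amortized update cost for the loop; the paper instead charges $O(M)$ per evaluation for $O(BM\,|\setAttribute_{(r)}|)$ in total and then invokes $BM < |V|$ together with $|\setAttribute_{(r)}| < |V|$ to absorb this into $O(|E|)$. Your refinement is tighter and makes the hypothesis $BM < |V|$ nearly superfluous, while the paper's coarser bound is the one that actually motivates that hypothesis; both final absorption steps are equally informal (neither $|V|^2$ nor $B\deg(r)$ is literally $O(|E|)$ without further assumptions), so your proposal is at the same level of rigor as, and structurally parallel to, the paper's proof.
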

\begin{proof}
Computing the relevance score in \eqref{eq:entityrelevance} for all attributes
involves computing the \ppr scores with respect to $ |\setFeedback| + 1 $ personalization vectors.
Using a power iteration with sparse matrix multiplications, \ppr can be computed in $ O(|E|) $ time.
Since \ppr computations are independent of each other, they can be completed in $ O(|E|) $ time using $ O(|\setFeedback|) $ processors.

Computing $ C_{\text{Topic}}^{\text{Max}} $ corresponds to the maximum coverage problem.
As this is NP-hard, we use a greedy approximation algorithm~\cite{kleinberg2006algorithm}, 
which takes $ O(BM) $ time. Other terms for score normalization (e.g., $ C_{\text{Type}}^{\text{Max}} $) can also be computed in $ O(BM) $ time.
Then we select up to $ B $ \es: 
Selecting one \e requires evaluating the marginal gain $ J_r(\setAttribute^* \cup \{a\}) - J_r(\setAttribute^*) $ 
for all $ a \in \setAttribute_{(r)} \setminus \setAttribute^*$.
As evaluating the marginal gain with respect to each term in \eqref{eq:justification_score} takes $ O(M) $,
greedy selection takes $ O(B M |\setAttribute_{(r)}|) $ steps in total.
Given that $ |\setAttribute_{(r)}| < |V| $ and assuming $ BM < |V| $, which is true in most cases, 
the running time for the greedy selection is $ O(|E|) $.
\end{proof}

\section{Evaluation Using Axioms}
\label{sec:axiomeval}
In this section, we evaluate different approaches to measure \attr relevance, 
using what we call \textit{axioms}, which are a set of \pgs with an intuitive expected outcome.
After describing axioms, 
we introduce other approaches to compute \attr relevance, and discuss how well axioms are satisfied by different approaches.
Experimental settings used for this evaluation are given in~\Cref{sec:appendix:settings}.

\begin{figure*}[!htbp]
\par\vspace{-0.5em}\par
\centering
\setlength{\abovecaptionskip}{1pt}
\makebox[\linewidth][c]{\includegraphics[width=1.02\linewidth]{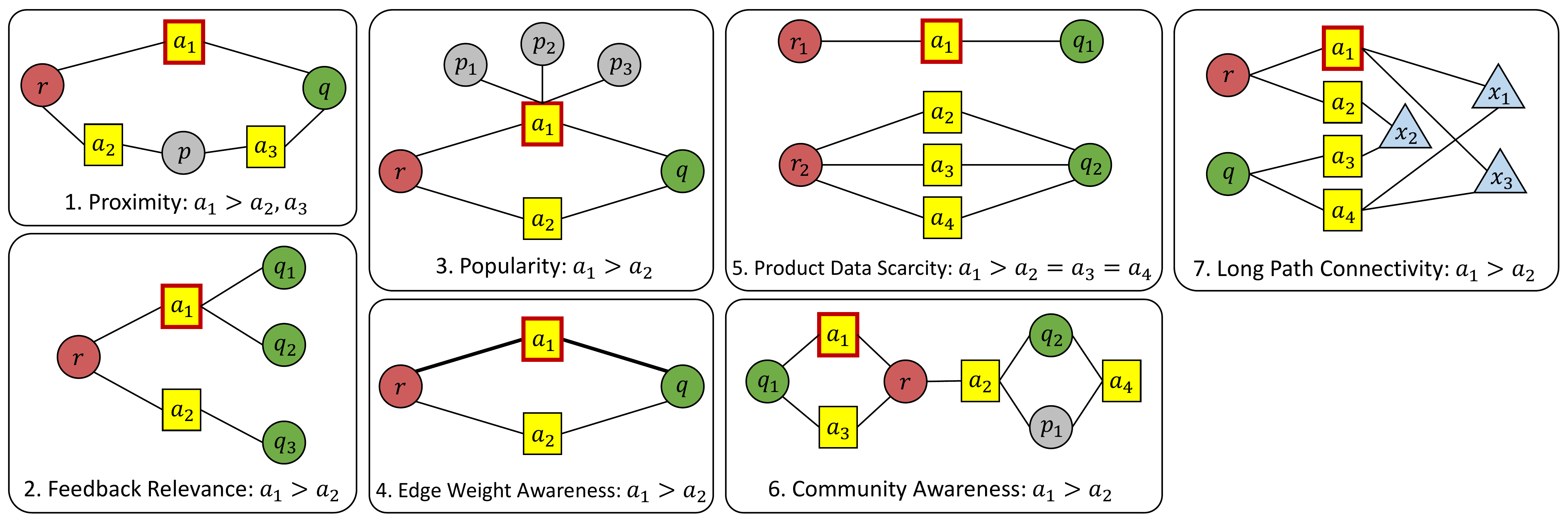}}
\caption{Axioms and expected relevance scores \eqref{eq:entityrelevance} of product attributes (see \Cref{sec:axiomeval:axioms} for details).}
\label{fig:axioms}
\end{figure*}

\begin{table*}[!t]
\setlength{\tabcolsep}{1.5mm}
\setlength{\abovecaptionskip}{4pt}
\tablefont
\centering
\caption{Our proposed \method satisfies all axioms in~\Cref{fig:axioms}, while other alternatives fail at least one of them.}
\makebox[0.4\textwidth][c]{\setlength\doublerulesep{0.5pt}
	\begin{tabular}{c | c | c | c | c | c | c | c | c}
		\toprule
		\multicolumn{2}{c|}{\diagbox{\textbf{Method}}{\textbf{Axioms}}} & \textbf{\makecell{1. Proximity}} & \textbf{\makecell{2. Feedback\\Relevance}} & \textbf{\makecell{3. Popularity\\}} & \textbf{\makecell{4. Edge Weight\\Awareness\\}} & \textbf{\makecell{5. Product Data\\Scarcity\\}} & \textbf{\makecell{6. Community\\Awareness\\}} & \textbf{\makecell{7. Long Path\\Connectivity\\}} \\
		\midrule \midrule
		\belowrulesepcolor{Gray}
		\rowcolor{Gray}
		\multicolumn{2}{c|}{\textbf{\method}} & $\boldcheckmark$ & $\boldcheckmark$ & $\boldcheckmark$ & $\boldcheckmark$ & $\boldcheckmark$ & $\boldcheckmark$ & $\boldcheckmark$ \\ \aboverulesepcolor{Gray}\midrule \midrule 
		\multicolumn{2}{c|}{ExpLOD~\cite{DBLP:conf/recsys/MustoNLGS16}} & $ \checkmark $ & & & & & $ \checkmark $ & \\
	 	\multicolumn{2}{c|}{\mpand~\cite{DBLP:conf/kdd/TongF06}} & $ \checkmark $ & $ \checkmark $ & $ \checkmark $ & $ \checkmark $ & $ \checkmark $ & & $ \checkmark $ \\
	 	\multicolumn{2}{c|}{\mpor~\cite{DBLP:conf/kdd/TongF06}} & $ \checkmark $ & $ \checkmark $ & $ \checkmark $ & $ \checkmark $ & $ \checkmark $ & & $ \checkmark $ \\
		\multicolumn{2}{c|}{ $ \mathrm{BA}_{q \rightarrow r}^{\text{Sink}} $~\cite{DBLP:conf/pakdd/TongPFYE10} } & $ \checkmark $ & $ \checkmark $ &  & $ \checkmark $ & $ \checkmark $ & $ \checkmark $ & \\ 
		\multicolumn{2}{c|}{ $ \mathrm{BA}_{q \rightarrow r}^{\text{Del}} $~\cite{DBLP:conf/pakdd/TongPFYE10} } & $ \checkmark $ & $ \checkmark $ &  &  & $ \checkmark $ &  & \\ 
		\multicolumn{2}{c|}{ $ \mathrm{BA}_{r \rightarrow q}^{\text{Sink}} $~\cite{DBLP:conf/pakdd/TongPFYE10} } & $ \checkmark $ & $ \checkmark $ &  & $ \checkmark $ & $ \checkmark $ & $ \checkmark $ & \\ 
		\multicolumn{2}{c|}{ $ \mathrm{BA}_{r \rightarrow q}^{\text{Del}} $~\cite{DBLP:conf/pakdd/TongPFYE10} } & $ \checkmark $ & $ \checkmark $ &  &  & $ \checkmark $ & $ \checkmark $ & \\ \midrule 
		\mpand &	\multirow{6}{*}{\rotatebox[origin=c]{90}{\makecell{w/ HAR~\cite{DBLP:conf/sdm/LiNY12}\\authority score}}} & $ \checkmark $ & $ \checkmark $ & $ \checkmark $ & $ \checkmark $ & $ \checkmark $ &  & \\
		\mpor &	 &  & $ \checkmark $ & $ \checkmark $ & $ \checkmark $ & $ \checkmark $ &  & \\
		$ \mathrm{BA}_{q \rightarrow r}^{\text{Sink}} $ & 	& $ \checkmark $ & $ \checkmark $ &  &  & $ \checkmark $ & & \\ 
		$ \mathrm{BA}_{q \rightarrow r}^{\text{Del}} $ & 	& $ \checkmark $ &  &  &  & $ \checkmark $ &  & \\ 
		$ \mathrm{BA}_{r \rightarrow q}^{\text{Sink}} $ &	& $ \checkmark $ & $ \checkmark $ &  & $ \checkmark $ & $ \checkmark $ & $ \checkmark $ &  \\ 
		$ \mathrm{BA}_{r \rightarrow q}^{\text{Del}} $ &  &	&  &  &  & $ \checkmark $ & $ \checkmark $ & $ \checkmark $ \\ 
		\bottomrule
	\end{tabular}
}
\vspace{-1.0em}
\label{tab:result:axioms}
\end{table*}

\subsection{Axioms}
\label{sec:axiomeval:axioms}

An axiom is a small product graph with expected relevance scores \eqref{eq:entityrelevance} for some product \attrs.
We use the axioms shown in \Cref{fig:axioms} to evaluate different methods for measuring \attr relevance.
In \Cref{fig:axioms}, squares denote product \attrs, and 
circles denote products, where $ r $ is the recommended product, $ q $ is the product with positive user feedback, and 
$ p $ is a product the user had no interaction with.

\textbf{1. Proximity.} 
\Attrs that are closer to products $ r $ and $ q $ should receive a higher relevance score.
Given that $ a_1 $ is directly connected to $ r $ and $ q $ (i.e., $ a_1 $ is a shared attribute of $ r $ and $ q $), 
while other \attrs are three hops away from either $ r $ or $ q $, 
we expect $ R_r(a_1) > R_r(a_2), R_r(a_3) $.

\textbf{2. Feedback Relevance.}
\Attrs that receive more support from the products in $ \setFeedback $ should receive a higher relevance score.
While both $ a_1 $ and $ a_2 $ are directly connected to $ r $, $ a_1 $ is covered by a greater number of products in $ \setFeedback $.
Thus, we expect $ R_r(a_1) > R_r(a_2) $.

\textbf{3. Popularity.}
More widely used \attrs should receive a higher relevance score
(e.g., consider explaining a movie recommendation with a popular actor vs. an actor who appeared only in that movie).
Thus, we require that $ R_r(a_1) > R_r(a_2) $.

\textbf{4. Edge Weight Awareness.}
\Attrs that are connected via edges of higher weight should receive a higher relevance score, 
as edge weight indicates the importance of each connection in the product graph.
Thus, we expect $ R_r(a_1) > R_r(a_2) $.
Note that satisfying this axiom is important to reflect the relative importance among different \attr types.
For instance, while country \attrs would be one of the most popular ones in the movie product graph,
it may not be a very interesting \e to the user.
Based on this prior knowledge, we can downplay the country type if our method satisfies this axiom.

\textbf{5. Product Data Scarcity.}
\Attrs of the product that contain scarcer information should receive a higher relevance score.
This axiom consists of two product graphs, in which each \attr is directly connected to $ r $ and $ q $.
While each graph contains only one product in $ \setFeedback $, 
the positive feedback expressed by $ \setFeedback $ is attributed to multiple \attrs in the larger graph, 
leading to each one of them being a weaker evidence of user's preference than the other product's unique \attr.
Thus, we expect $ R_r(a_1) > R_r(a_i) $ for $ i \ne 1 $.

\textbf{6. Community Awareness.}
When the recommended item belongs to one community (e.g., a group of shoes),
it is desirable to put more weight on \attrs belonging to the same community than on others (e.g., a group of electronic devices),
as similar products tend to share more \attrs (e.g., product details, keywords) with each other than with different products.
In this axiom, there are two small communities of products and \attrs,
where $ r $ and $ a_2 $ act as a bridge between communities.
Also, each community has only one product with positive feedback.
Therefore, we expect $ R_r(a_1) > R_r(a_2) $.

\textbf{7. Long Path Connectivity.}
Even when \attrs are not directly connected to $ r $ and $ q $, 
\attrs more strongly connected to $ r $ and $ q $ should receive a higher relevance score.
Here, $ a_1 $ is more strongly linked to $ q $ via $ x_1 $ and $ x_3 $ than $ a_2 $, 
which is linked to $ q $ only via $ x_2 $.
Thus, we expect $ R_r(a_1) > R_r(a_2) $.

\subsection{Baselines}\label{sec:axiomeval:baseline}

Below we use the same notation as in \eqref{eq:ppr_notation} to specify the query nodes. \Cref{tab:symbols} provides the definition of symbols.

\subsubsection{Relevance Models} We consider three approaches that measure the relevance of \attr $ a $, 
given recommendation $ r $ and products $ q_1,\ldots,q_{|\setFeedback_u|} $.

\textbf{ExpLOD}~\cite{DBLP:conf/recsys/MustoNLGS16} assigns a high score to those \attrs 
that are highly connected to the products in $ \setFeedback_u \cup \{r\} $, using the following formula:
\begin{align}
\text{ExpLOD}(a) = \left( \alpha \cdot \frac{n_{a,\setFeedback_u}}{|\setFeedback_u|} + \beta \cdot n_{a,r} \right) \cdot \mathrm{IDF}_{a}
\end{align}
where $ n_{a,\setFeedback_u} $ is the number of edges between product \attr $ a $ and the products in $ \setFeedback_u $,
$ n_{a,r} $ is the number of edges between $ a $ and $ r $, and
$ \mathrm{IDF}_{a} $ is the reciprocal of the number of products that are described by \attr $ a $.

\textbf{Meeting Probability (MP)}~\cite{DBLP:conf/kdd/TongF06}
assigns a high relevance score to an \attr that is close to the recommended item~$ r $ and products $ |\setFeedback_u| $.
MP has been successfully used in identifying nodes that have strong connections to the query nodes.
We consider the following two MP scores.
\begin{align}
\text{\mpand}(a) &\triangleq \prod_{p \in \setFeedback_u \cup \{r\}} {\mathrm{PPR}(a|p\!=\!1)} \label{eq:mp} \\
\text{\mpor}(a) &\triangleq 1 -\!\prod_{p \in \setFeedback_u \cup \{r\}} \left(1 - \mathrm{PPR}(a|p\!=\!1)\right) \label{eq:mpor}
\end{align}

\textbf{BASSET (BA)}~\cite{DBLP:conf/pakdd/TongPFYE10} 
aims to identify a small number of good gateway nodes between a source node $ s $ and a target node $ t $ in the given graph.
Let $ \mathrm{PPR}_{\mathcal{I}}^{\text{Sink}}(t|s=1) $ denote the PPR score from source $ s $ to target $ t $,
after setting the nodes denoted by $ \mathcal{I} $ as sinks (i.e., nodes with no outgoing edges).
We also consider a related model, $ \mathrm{PPR}_{\mathcal{I}}^{\text{Del}}(t|s=1) $, 
in which we delete the nodes denoted by~$ \mathcal{I} $, instead of making them sinks.
Note that, in out setting, given a recommendation $ r $ and a product $ q $, we can consider two directions of $ q \rightarrow r $ and $ r \rightarrow q $.
For instance, BA score of \attr $ a $ between source $ q $ and target $ r $ using $ \mathrm{PPR}_{\mathcal{I}}^{\text{Sink}} $ is defined as:
\begin{align}
\mathrm{BA}_{q \rightarrow r}^{\text{Sink}}(a) \triangleq \sum_{i=1}^{|\setFeedback_u|} {\mathrm{PPR}(r|q_i\!=\!1)} - {\mathrm{PPR}_{\{a\}}^{\text{Sink}}(r|q_i\!=\!1)}.
\end{align}
Three other options, $ \mathrm{BA}_{q \rightarrow r}^{\text{Del}}(a) $, $ \mathrm{BA}_{r \rightarrow q}^{\text{Sink}}(a) $, 
and $ \mathrm{BA}_{r \rightarrow q}^{\text{Del}}(a) $ are defined analogously.

\subsubsection{Proximity Measure}

A proximity measure provides a way to compute node proximity with respect to a query node.
We consider two important proximity measures.

\textbf{PPR} (Personalized PageRank)~\cite{DBLP:conf/www/Haveliwala02} measures node-to-node proximity 
by the limiting probability distribution of a random walker biased towards a set of query nodes.

\textbf{HAR}~\cite{DBLP:conf/sdm/LiNY12} is a generalization of SALSA~\cite{DBLP:journals/tois/LempelM01} to handle multi-relational data, 
which enables users to specify the relative importance of relations.
HAR has been shown to outperform SALSA~\cite{DBLP:journals/tois/LempelM01} and HITS~\cite{DBLP:journals/jacm/Kleinberg99} 
in identifying relevant results to the query input.
HAR computes hub score and authority score with respect to a query entity.
While we considered both scores as a proximity measure,
due to space constraints, we report only the result obtained with authority score as hub score was mostly outperformed by authority score.

Among the relevance models introduced above, MP and BA internally use PPR.
We evaluate variants of these baselines using HAR as their proximity measure (except for ExpLOD, which is not a random walk-based method).

\subsection{Results}

\Cref{tab:result:axioms} summarizes which axioms are satisfied by different approaches to measure \attr relevance.
A checkmark indicates that the expected outcome of the axiom has been achieved by the corresponding method.
\method is the only one that satisfies all axioms.
Other alternatives fail at least one of the axioms;
among them, \mpr is the next best one, failing only (6) Community Awareness axiom.
In~\Cref{sec:realeval}, we show that \method also leads to more accurate \es than \mpr in experiments using real-world product graphs.

\explod does not satisfy most of the axioms, mainly due to the fact that 
it can only consider direct edges between products and \attrs, failing to propagate information over the graph.

In general, BASSET (BA) led to worse results than \method and \mpr, 
failing (1) Popularity, (6) Community Awareness, and (7) Long Path Connectivity axioms in many cases,
indicating that good gateway nodes may not serve well as a \e.
Also, while \har achieved a reasonably good result (for instance, with \mpand), 
relevance models could satisfy more axioms using \ppr as a proximity measure.

\section{Evaluation Using Real-World Data}
\label{sec:realeval}
In this section, we address the following questions.

\begin{enumerate}[label=\textbf{Q{{\arabic*}}.},ref=Q\arabic*]
\item \label{sec:realeval:q1} \textbf{\E Quality.} 
How well does \method justify recommendations?

\item \label{sec:realeval:q2} \textbf{Scalability.}
How does \method scale up with the increase of the input size?

\item \label{sec:realeval:q3} \textbf{Relevance-Diversity Trade-Off.}
How does increasing the weight for diversity affect the relevance of \es?
\end{enumerate}
Experimental settings are given in~\Cref{sec:appendix:settings}.

\begin{table}[!t]
\centering
\setlength{\tabcolsep}{1.0mm}
\setlength{\abovecaptionskip}{3pt}
\caption{
	Statistics of real-world product graphs.
}
\makebox[0.4\textwidth][c]{
\begin{tabular}{ c | c | r | r | r }
	\toprule
	\textbf{Name} & \textbf{\makecell{Product}} & \textbf{\makecell{\# Products}} & \textbf{\makecell{\# Nodes}} & \textbf{\makecell{\# Edges}} \\ 
	\midrule
	\movie & Movie & 4,803 & 308,304 & 1,329,428 \\
	\paper & Paper & 11,941 & 111,007 & 724,962 \\
	\paperlarge & Paper & 2,094,396 & 7,426,773 & 100,000,000 \\
	\bottomrule
\end{tabular}
}
\label{tab:datasets}
\end{table}

\vspace{-0.5em}
\subsection{Datasets}
\vspace{-0.5em}

We construct product graphs from public datasets on movies and publications.
\Cref{tab:datasets} shows the statistics of our datasets.

\textbf{\movie} consists of movies and movie-related attributes, such as actors, directors, crews, casts, and movie keywords, 
extracted from the TMDb 5000 movie dataset\footnote{https://www.kaggle.com/tmdb/tmdb-movie-metadata}.
We also added positive reviews to \movie, which gave 10/10 rating to the movies.
Reviews were retrieved from the IMDb website\footnote{https://www.imdb.com/}, instead of TMDb, 
since more reviews were available on IMDb.
We extracted keywords from each review, by filtering out those words whose tf-idf score is below a threshold, 
and connected them with the reviews.

\textbf{\paper} is a product graph of papers and related attributes, such as authors, citations, publication venues, and fields of study,
constructed from the citation network dataset v12~\cite{DBLP:conf/kdd/TangZYLZS08}\footnote{https://www.aminer.org/citation}.
In \paper, we included papers published at KDD, SIGMOD, and ICML, and their attributes,
while excluding papers that were cited less than 5 times.
We also created \paperlarge that contains $ 10^8 $ edges for scalability evaluation,
which is also constructed from the same citation network dataset, and includes all venues and papers.

\subsection{Baselines}
Among the baselines used in \Cref{sec:axiomeval},
we use \mpand and \mpor~\cite{DBLP:conf/kdd/TongF06}, which satisfied most axioms among baselines, and
\explod~\cite{DBLP:conf/recsys/MustoNLGS16}, which is a representative non-random walk based method for justifying recommendations.
Among them, we exclude BA and \har (which was used as a proximity measure)
as they were mostly outperformed by other alternatives.
We also include \pr~\cite{page1999pagerank}, which estimates \attr relevance by its PageRank score in the product graph.

\subsection{\ref{sec:realeval:q1}. \E Quality}\label{sec:realeval:quality}

We evaluate the quality of \es in two ways: automatic evaluation and qualitative analysis.

\subsubsection{{Automatic Evaluation}}
For an automatic and objective evaluation, 
we consider the task of user preference retrieval.

\textbf{User Preference Retrieval.}
Among several attributes of the recommended product $ r $, 
we want those that reflect the user's preference better than others
to receive higher relevance scores and be used as a \e.
In our two datasets, the review and the paper written by a user clearly reflect the user's preference.
Thus, among the reviews and papers associated with recommended product $ r $,
it is desirable for the review and paper written by the user to get higher scores than others.

For \movie, we select a positive review written by the user who wrote at least 10 positive reviews.
Note that the movies for which user $ u $ wrote positive reviews correspond to $ Q_u $.
Similarly, for \paper, we select a reference written by the user who published at least 15 papers.
In total, we randomly select \numSelection reviews and \numSelection citations.

\textbf{Performance Evaluation.} We evaluate preference retrieval results using the mean reciprocal rank (MRR).
Let $ \setAttribute_R $ denote the chosen attributes of a specific type (e.g., reviews or cited papers), written by different users.
Let $ p_i $ refer to the product of $ i $-th attribute in $ \setAttribute_R $ (e.g., the product for which the $ i $-th review was written), 
and $ \text{rank}_i $ be the rank position of the $ i $-th attribute among the corresponding attributes of product $ p_i $,
where ranks are determined by the estimated relevance score computed with respect to $ r $ and 
the products $ Q_u $ that received positive feedback by user $ u $ who created the $ i $-th attribute 
(e.g., movies that received positive reviews by user $ u $).
MRR is defined by
\begin{align}
\setlength{\abovedisplayskip}{1pt}
\setlength{\belowdisplayskip}{\abovedisplayskip}
\setlength{\abovedisplayshortskip}{0pt}
\setlength{\belowdisplayshortskip}{1pt}
\text{MRR} = \frac{1}{|\setAttribute_R|} \sum\limits_{i=1}^{|\setAttribute_R|} \frac{1}{\text{rank}_i},
\end{align}
and higher values are better.

\textbf{Results.}
\Cref{fig:exp:crown:prefretrieval} shows the MMR on two datasets.
\method achieved the best MMR, which is up to 20.7\% higher than the second best result achieved by \mpand.
While \mpor outperformed \mpand on \paper by a small margin, 
results show that \mpor's performance is more sensitive to the dataset than \mpand.
\pr's performance is worse than \method as it does not consider $ Q_u $ and $ r $ in measuring attribute relevance.
Since each review applies to only one product, 
\explod ends up assigning identical scores to all reviews, making it inapplicable to be used to retrieve user preferences on \movie.
On \paper, \explod obtains even lower MMR than \pr,
which is due to the fact that \explod computes relevance score based only on the direct connection between products and attributes,
failing to propagate user preference over a graph.

Note that since we ignore the relevance of other reviews and citations, which is unknown to us, 
these results are a lower bound of the true MMR.
As MMR considers the rank of the first relevant entity, 
the true MMR would be higher than the current result 
if there is another relevant attribute, ranked higher than the user's review and paper.

\begin{table*}[!t]
\centering
\setlength{\aboverulesep}{0pt}
\setlength{\belowrulesep}{0pt}
\setlength{\extrarowheight}{1pt}
\setlength{\tabcolsep}{0.2em}
\setlength{\abovecaptionskip}{4pt}
\caption{\method produces qualitatively better results than \mpand. The table shows the top-15 references 
	cited by the paper ``Rubik: Knowledge Guided Tensor Factorization and Completion for Health Data Analytics''~\cite{DBLP:conf/kdd/WangCGDKCMS15},
	ordered by the relevance computed with \method (with $ \rho $ set to $ 0.5 $ and $ 0.9 $) and \mpand.
	Papers on electronic health record analysis are highlighted in blue font, and 
	the two highly relevant papers that employ tensor factorization for electronic health record analysis are further highlighted in cyan background color.
}
\makebox[0.4\textwidth][c]{
	\begin{tabular}{ p{0.355\textwidth} | p{0.355\textwidth} | p{0.355\textwidth} }
		\toprule
		\makecell[c]{\textbf{\method} ($ \rho=0.5 $)} & \makecell[c]{\textbf{\method} ($ \rho=0.9 $)} & \makecell[c]{\textbf{\mpand}~\cite{DBLP:conf/kdd/TongF06}} \\
		\midrule
		Positive tensor factorization & Tensor decompositions and applications & Tensor decompositions and applications \\ \midrule
		Tensor decompositions and applications & Positive tensor factorization & Distributed optimization and statistical learning via the alternating direction method of multipliers \\ \midrule
		\cyancell{\highlight{Marble: high-throughput phenotyping from electronic health records via sparse nonnegative tensor factorization}} & \cyancell{\highlight{Marble: high-throughput phenotyping from electronic health records via sparse nonnegative tensor factorization}} & Positive tensor factorization \\ \midrule
		On tensors, sparsity, and nonnegative factorizations & \cyancell{\highlight{Limestone: high-throughput candidate phenotype generation via tensor factorization}} & Scalable tensor factorizations for incomplete data \\ \midrule
		\cyancell{\highlight{Limestone: high-throughput candidate phenotype generation via tensor factorization}} & On tensors, sparsity, and nonnegative factorizations & Learning with tensors: a framework based on convex optimization and spectral regularization \\ \midrule
		Distributed optimization and statistical learning via the alternating direction method of multipliers & Scalable tensor factorizations for incomplete data & \cyancell{\highlight{Marble: high-throughput phenotyping from electronic health records via sparse nonnegative tensor factorization}} \\ \midrule
		Scalable tensor factorizations for incomplete data & Distributed optimization and statistical learning via the alternating direction method of multipliers & A block coordinate descent method for regularized multiconvex optimization with applications to nonnegative tensor factorization and completion \\ \midrule
		Tensor completion for estimating missing values in visual data & Tensor completion for estimating missing values in visual data & On tensors, sparsity, and nonnegative factorizations \\ \midrule
		\highlight{Network discovery via constrained tensor analysis of fMRI data} & \highlight{Network discovery via constrained tensor analysis of fMRI data} & Tensor completion for estimating missing values in visual data \\ \midrule
		Learning with tensors: a framework based on convex optimization and spectral regularization & \highlight{Next-generation phenotyping of electronic health records} & \highlight{Network discovery via constrained tensor analysis of fMRI data} \\ \midrule
		\highlight{Next-generation phenotyping of electronic health records} & Learning with tensors: a framework based on convex optimization and spectral regularization & \cyancell{\highlight{Limestone: high-throughput candidate phenotype generation via tensor factorization}} \\ \midrule
		Square deal: lower bounds and improved relaxations for tensor recovery & Square deal: lower bounds and improved relaxations for tensor recovery & \highlight{Next-generation phenotyping of electronic health records} \\ \midrule
		FlexiFaCT: scalable flexible factorization of coupled tensors on Hadoop & FlexiFaCT: scalable flexible factorization of coupled tensors on Hadoop & Square deal: lower bounds and improved relaxations for tensor recovery \\ \midrule
		All-at-once optimization for coupled matrix and tensor factorizations & All-at-once optimization for coupled matrix and tensor factorizations & Convex tensor decomposition via structured Schatten norm regularization \\ \midrule
		A block coordinate descent method for regularized multiconvex optimization with applications to nonnegative tensor factorization and completion & Convex tensor decomposition via structured Schatten norm regularization & A new convex relaxation for tensor completion \\
		\bottomrule
	\end{tabular}
}
\vspace{-1.5em}
\label{tab:result:qualitative1}
\end{table*}

\subsubsection{Qualitative Analysis}

We present two case studies where we compare the results obtained with \method and \mpand on \paper.
To see how the parameter $ \rho $ in \eqref{eq:entityrelevance} affects \method, 
we report two results for \method using $ \rho=0.5 $ and $ \rho=0.9 $.

\textit{Case 1.} 
\method and \mpand are given the paper entitled 
``Rubik: Knowledge Guided Tensor Factorization and Completion for Health Data Analytics''~\cite{DBLP:conf/kdd/WangCGDKCMS15} as a recommended item $ r $, 
and the set $ Q_u $ with ten papers on matrix and tensor factorizations.
\Cref{tab:result:qualitative1} shows top-15 papers cited by the recommended paper,
ordered by the relevance computed with \method~\eqref{eq:entityrelevance} and \mpand~\eqref{eq:mp}.
The papers in blue font deal with electronic health record (EHR) analysis, and
those highlighted in cyan background color are two highly relevant papers that employ tensor factorization for EHR analysis.
Among the citations, these highlighted papers are particularly relevant \es,
since they cut across two topics central to the recommended paper and the papers in $ Q_u $, 
namely, EHR analysis and tensor factorization.
In \Cref{tab:result:qualitative1}, these two highly relevant papers belong to the top-5 citations in both results of \method,
while they are ranked at 6th and 11th places in the result of \mpand.
Further, since EHR analysis is a major topic of the recommended paper,
with $ \rho=0.9 $, \method gives more weight to EHR analysis than with $ \rho=0.5 $,
which boosts the ranking of the papers on EHR analysis.

\textit{Case 2.}
\method and \mpand are given the paper entitled ``Towards Parameter-Free Data Mining''~\cite{DBLP:conf/kdd/KeoghLR04} as a recommended item $ r $, 
and the set $ Q_u $ with ten papers on time series analysis.
\Cref{tab:result:qualitative2} shows top-15 papers cited by the recommended paper,
ordered by the relevance computed with \method~\eqref{eq:entityrelevance} and \mpand~\eqref{eq:mp}.
The citations in blue font are the papers relevant to the MDL principle.
Among the cited publications, those on MDL are highly relevant \es
as MDL principle is central to the main idea of the recommended paper.
At the same time, since $ Q_u $ contains papers on time series analysis, references on time series are relevant.
In \Cref{tab:result:qualitative2}, \method retrieves four papers on MDL (ranked at 2nd, 7th, 11th, and 13th places with $ \rho=0.9 $), 
and eleven papers on time series.
On the other hand, \mpand retrieves only two papers on MDL (ranked at 7th and 14th places).
Also, in this case, increasing $ \rho $ to $ 0.9 $ led to a more drastic change than in the previous case, 
boosting the ranking of the papers on MDL.

Overall, in these case studies, \method produces qualitatively better results than \mpand,
which are more balanced in terms of the relevance to the recommendation, user preferences, and the diversity of paper topics.

\begin{table*}[!t]
\centering
\setlength{\aboverulesep}{0pt}
\setlength{\belowrulesep}{0pt}
\setlength{\extrarowheight}{1pt}
\setlength{\tabcolsep}{0.2em}
\setlength{\abovecaptionskip}{4pt}
\caption{\method produces qualitatively better results than \mpand. The table shows the top-15 references 
		cited by the paper ``Towards Parameter-Free Data Mining''~\cite{DBLP:conf/kdd/KeoghLR04},
	ordered by the relevance computed with \method (with $ \rho $ set to $ 0.5 $ and $ 0.9 $) and \mpand.
	Papers on the minimum description length principle are highlighted in blue font.}
\makebox[0.4\textwidth][c]{
\begin{tabular}{ p{0.355\textwidth} | p{0.355\textwidth} | p{0.355\textwidth} }
	\toprule
	\makecell[c]{\textbf{\method} ($ \rho=0.5 $)} & \makecell[c]{\textbf{\method} ($ \rho=0.9 $)} & \makecell[c]{\textbf{\mpand}~\cite{DBLP:conf/kdd/TongF06}} \\ 
	\midrule
	On the need for time series data mining benchmarks: a survey and empirical demonstration & On the need for time series data mining benchmarks: a survey and empirical demonstration & On the need for time series data mining benchmarks: a survey and empirical demonstration \\ \midrule
	Making time-series classification more accurate using learned constraints & \highlight{An introduction to Kolmogorov complexity and its applications} & Clustering of time series subsequences is meaningless: implications for previous and future research \\ \midrule
	Distance measures for effective clustering of ARIMA time-series & Making time-series classification more accurate using learned constraints & A symbolic representation of time series, with implications for streaming algorithms \\ \midrule
	Deformable Markov model templates for time-series pattern matching & Distance measures for effective clustering of ARIMA time-series & Making time-series classification more accurate using learned constraints \\ \midrule
	TSA-tree: a wavelet-based approach to improve the efficiency of multi-level surprise and trend queries on time-series data & Deformable Markov model templates for time-series pattern matching & Distance measures for effective clustering of ARIMA time-series \\ \midrule
	Mining the stock market: which measure is best? & TSA-tree: a wavelet-based approach to improve the efficiency of multi-level surprise and trend queries on time-series data & Mining the stock market: which measure is best? \\ \midrule
	Supporting content-based searches on time series via approximation & \highlight{Modeling by shortest data description} & \highlight{Modeling by shortest data description} \\ \midrule
	Clustering of time series subsequences is meaningless: implications for previous and future research & Mining the stock market: which measure is best? & Deformable Markov model templates for time-series pattern matching \\ \midrule
	\highlight{An introduction to Kolmogorov complexity and its applications} & A symbolic representation of time series, with implications for streaming algorithms & Indexing multi-dimensional time-series with support for multiple distance measures \\ \midrule
	A symbolic representation of time series, with implications for streaming algorithms & Clustering of time series subsequences is meaningless: implications for previous and future research & FastMap: a fast algorithm for indexing, data-mining and visualization of traditional and multimedia datasets \\ \midrule
	Indexing multi-dimensional time-series with support for multiple distance measures & \highlight{Inferring decision trees using the minimum description length principle} & TSA-tree: a wavelet-based approach to improve the efficiency of multi-level surprise and trend queries on time-series data \\ \midrule
	\highlight{Modeling by shortest data description} & Supporting content-based searches on time series via approximation & Online novelty detection on temporal sequences \\ \midrule
	\highlight{Inferring decision trees using the minimum description length principle} & \highlight{The similarity metric} & Supporting content-based searches on time series via approximation \\ \midrule
	\highlight{The similarity metric} & Indexing multi-dimensional time-series with support for multiple distance measures & \highlight{Inferring decision trees using the minimum description length principle} \\ \midrule
	Online novelty detection on temporal sequences & Online novelty detection on temporal sequences & Graph-based anomaly detection \\
	\bottomrule
\end{tabular}
}
\vspace{-1em}
\label{tab:result:qualitative2}
\end{table*}

\begin{figure}[!t]
	\centering
	\makebox[\linewidth][c]{\includegraphics[width=0.7\linewidth]{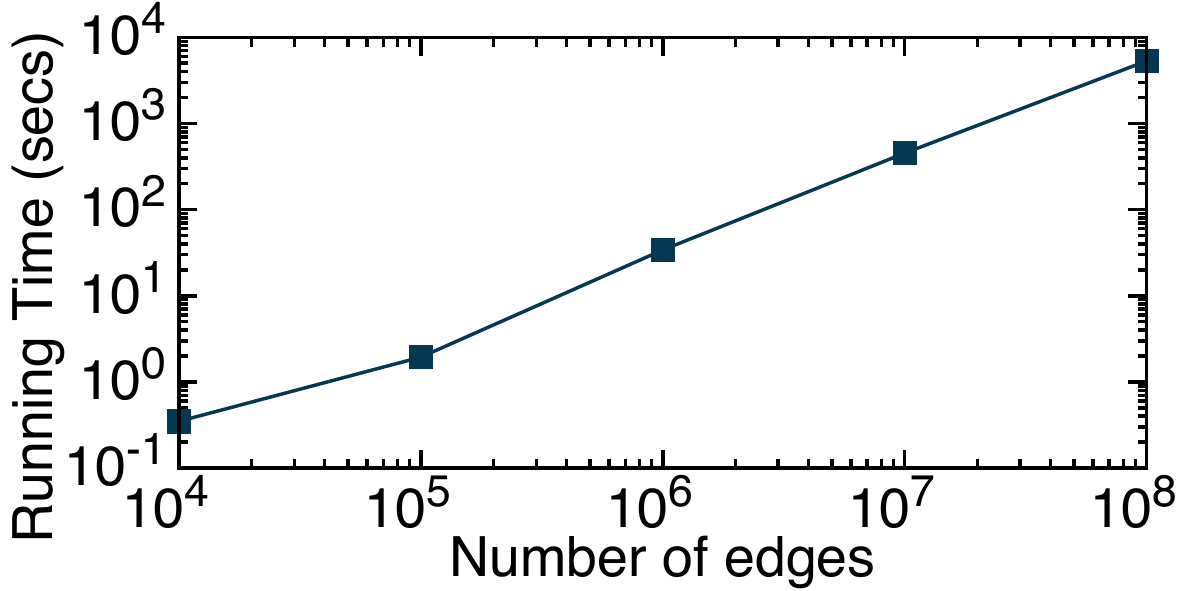}}
	\caption{\method exhibits near-linear scalability.}
	\label{fig:exp:scalability}
\end{figure}

\subsection{\ref{sec:realeval:q2}. Scalability} 

To evaluate the scalability of \method, we created increasingly larger subgraphs of \paperlarge, 
with each subgraph being $10 \times $ larger than the previous one.
\Cref{fig:exp:scalability} reports the running time of \method on these product graphs of varying sizes, 
where the running time was averaged over three simulated users with at least ten products in his $ \setFeedback_u $.
The results show that \method achieves near-linear scalability, successfully scaling up to the largest graph with $ 10^8 $ edges.

\begin{figure}[!t]
	\centering
	\setlength{\abovecaptionskip}{1pt}
	\makebox[\linewidth][c]{\includegraphics[width=0.75\linewidth]{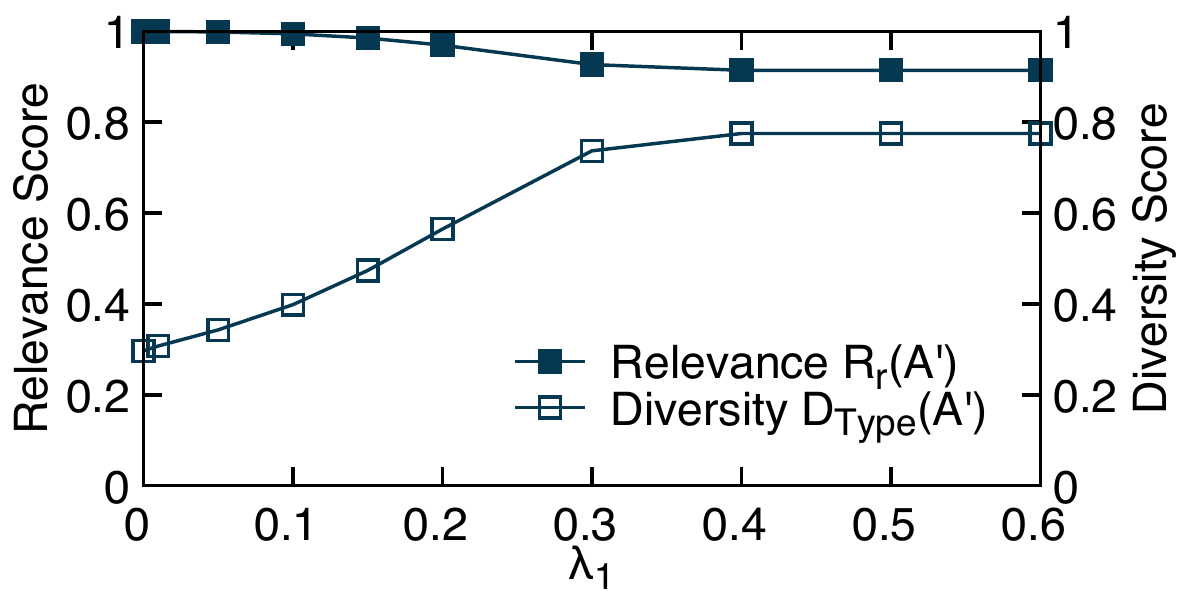}}
	\caption{Relevance-diversity trade-off on \movie. 
		Varying $ \lambda_1 $ affects 
		$ R_r(\setAttribute') $ and $ D_{\text{Type}}(\setAttribute') $
		of the selected \attrs $ \setAttribute' $.}
	\label{fig:exp:sensitivity:diversity}
\end{figure}

\subsection{\ref{sec:realeval:q3}. Relevance-Diversity Trade-Off}

We measure how varying $ \lambda_1 $, the weight for the \attr type diversity, affects 
the relevance score $ R_r(\setAttribute') $ and the diversity score $ D_{\text{Type}}(\setAttribute') $
of the selected \attrs $ \setAttribute' $.
Specifically, we randomly selected 50 users from \movie, who rated at least 10 movies, and 
generated 30 \es varying $ \lambda_1 $ from 0 to 0.6.
In \Cref{fig:exp:sensitivity:diversity}, we report relevance and diversity scores averaged over all users.
\Cref{fig:exp:sensitivity:diversity} shows that while there is a trade-off between relevance and diversity, 
it is possible to attain high relevance and diversity.
As $ \lambda_1 $ is increased from 0 to 0.3, $ D_{\text{Type}}(\setAttribute') $ increases by 148\%,
while $ R_r(\setAttribute') $ decreases by 7\%. 
Results indicate that setting $ \lambda_1 $ to an appropriate value 
can be beneficial in providing diverse and relevant \es to the users.

\section{Related Work}
\label{sec:related}

\textbf{Explainable Recommendation.}
Methods for explainable recommendation can be grouped into embedded and post-hoc approaches.
Embedded approaches~\cite{DBLP:conf/www/Wang0FNC18,DBLP:conf/sigir/ChenZ0NLC17,DBLP:conf/kdd/DiaoQWSJW14,DBLP:conf/sigir/ChenCXZ0QZ19,DBLP:conf/sigir/ZhangL0ZLM14}
aim to develop interpretable models, such that explanations for the model decision can be naturally provided.
While embedded methods have high model explainability, different explanation techniques need to be developed for different types of recommendation methods.
Our framework is model-agnostic and can be applied to different recommenders flexibly.
We refer the reader to \cite{DBLP:journals/ftir/ZhangC20} for an in-depth review of embedded methods.

In post-hoc approaches, explanations and recommendations are generated from separate models.
Post-hoc explanations are often 
item-based (e.g., ``Customers who bought this item also bought''~\cite{DBLP:reference/sp/TintarevM15}),
neighbor-based (e.g., ``Your neighbors' rating for this item is''~\cite{herlocker2000explaining}), and
content-based (e.g., keywords~\cite{bilgic2005explaining}, features~\cite{DBLP:conf/recsys/Tintarev07}, 
tags~\cite{DBLP:conf/iui/VigSR09}, or reviews~\cite{DBLP:conf/iui/DonkersL018,DBLP:conf/icdm/WangCYWW018}).
Since these methods typically select an explanation based on one of manually defined templates or generate explanations using one type of data, 
the diversity of their explanations are limited by the form of such templates or the type of input data.
On the other hand, \method is a unified framework that can work with multiple types of data, and 
its diversity increases as we provide more data to the framework.
ExpLOD~\cite{DBLP:conf/recsys/MustoNLGS16} provides more diverse explanations than earlier methods
by using linked open data cloud in a graph-based framework.
However, ExpLOD and others such as \cite{DBLP:conf/iui/VigSR09} produce \es
only using the items in the user profile, ignoring other items and their attributes relevant to the recommendation and the user profile.
By using random walk-based node proximity, \method utilizes both the user profile and other data relevant to it.

\vspace{-1em}
\textbf{Node Importance.}
PageRank (PR)~\cite{page1999pagerank} measures node importance 
by considering the limiting probability of a random surfer that travels over a graph following any out-going edge with uniform probability.
The original PR does not depend on the query, and 
personalized PR (PPR)~\cite{DBLP:conf/www/Haveliwala02} followed PR to estimate query-dependent node importance.
Random walk with restart (RWR)~\cite{DBLP:journals/kais/TongFP08,DBLP:conf/sigmod/JungPSK17} 
can be seen as a special case of PPR that considers one query node.
HITS~\cite{DBLP:journals/jacm/Kleinberg99} first retrieves a focused subgraph with respect to the search query, 
and computes hub and authority scores for each node in the focused subgraph.
SALSA~\cite{DBLP:journals/tois/LempelM01} can be considered as an improvement of HITS,
which also computes hub and authority scores like HITS, while less susceptible to the tightly knit community (TKC) effect than HITS.
HAR~\cite{DBLP:conf/sdm/LiNY12} is a generalization of SALSA that deals with multi-relation data, and 
computes hub and authority scores for objects and relevance scores for relations, with respect to a query input.
GENI~\cite{DBLP:conf/kdd/ParkKDZF19} and MultiImport~\cite{DBLP:conf/kdd/ParkKDZF20} are semi-supervised techniques
to estimate node importance by considering both the graph structure and real-world signals of node popularity.
Among the above methods, query-sensitive ones can be used to measure node-to-node proximity,
which have also been used to identify a subset of nodes or a subgraph,
which have strong connections to the query nodes~\cite{DBLP:conf/kdd/FaloutsosMT04,DBLP:conf/kdd/TongF06} and
are important in connecting source and target nodes~\cite{DBLP:conf/pakdd/TongPFYE10,DBLP:journals/ir/TongPFYE12}.
In this work, we consider the effectiveness of these approaches for the task of recommendation \e,
and use the most effective ones to define the relevance score, which best satisfy the axioms of good \es.

\vspace{-0.5em}
\section{Conclusion}
\label{sec:concl}
In this paper, we present a graph-based formulation of the problem of recommendation \e,
and develop \method, a unified model-agnostic framework which can produce concise, diverse, and personalized \es in a principled manner,
based on various types of product and user data.
We show the effectiveness and efficiency of \method in an evaluation using axioms and real-world data.
In this work, we propose preference retrieval as one way of evaluating the \e quality.
Developing additional automatic and objective evaluation metrics that can measure the quality of \es from different perspectives
will also be an important direction for future research on explainable recommendations.

\vspace{-0.5em}
\bibliographystyle{IEEEtran}

\appendix
\vspace{-0.1em}
\subsection{Experimental Settings}
\label{sec:appendix:settings}
\vspace{-0.1em}

\textit{Machine.}
We ran experiments on a machine with 32 Intel Xeon CPU E7-8837 cores at 2.67GHz, and 1 TB of memory.

\textit{Parameters.}
For \ppr~\cite{DBLP:conf/www/Haveliwala02}, we set the damping factor to $ 0.85 $.
For \har~\cite{DBLP:conf/sdm/LiNY12}, we set the weighting parameters $ \alpha, \beta $, and $ \gamma $ to $ 0.15 $.
For \method, we set $ \rho = 0.5$ and budget B to $ 15 $, unless otherwise stated.
For \explod~\cite{DBLP:conf/recsys/MustoNLGS16}, we followed the settings used in~\cite{DBLP:conf/recsys/MustoNLGS16}, 
where the two weighting factors $ \alpha $ and $ \beta $ were set to $ 0.5 $.

\end{document}